\documentclass[runningheads,orivec]{llncs}
\usepackage[T1]{fontenc}
\usepackage{pifont}
\usepackage{siunitx}
\usepackage{graphicx}
\usepackage{todonotes}
\usepackage[dvipsnames]{xcolor}
\RequirePackage{listings}
\RequirePackage{ amssymb }
\RequirePackage{colortbl}
\RequirePackage{ amsmath }
\RequirePackage{ dsfont }
\usepackage{xspace}
\usepackage{bigfoot}
\usepackage{hyperref}
\usepackage{xurl}
\usepackage[labelfont=bf, figurename=Figure]{caption}
\usepackage{subcaption}
\usepackage{lstautogobble}
\usepackage{zref-savepos}
\usepackage{chngcntr}
\usepackage{paralist}
\input{prelude.tex}
\begin{document}

\counterwithout{lstlisting}{chapter}
\title{Scalable Deductive Verification of Data-Level Parallel Programs}
%
%\titlerunning{Abbreviated paper title}
% If the paper title is too long for the running head, you can set
% an abbreviated paper title here
%

\author{Lars B. van den {Haak}\inst{1}\orcidID{0000-0002-0330-5016} \and
Anton {Wijs}\inst{1}\orcidID{0000-0002-2071-9624} \and
Marieke {Huisman}\inst{2}\orcidID{0000-0003-4467-072X}
}
\authorrunning{L. B. van den Haak, A.J. Wijs and M. Huisman}
\institute{
Eindhoven University of Technology, The Netherlands \\
\email{\{l.b.v.d.haak, a.j.wijs\}@tue.nl} \and
University of Twente, The Netherlands \\
\email{m.huisman@utwente.nl}
}
\maketitle              % typeset the header of the contribution
\begin{abstract}
This paper introduces several techniques that improve the scalability of the deductive verification of data-level programs working on arrays and matrices. First of all, we introduce a technique to rewrite expressions with (nested) quantifiers, so suitable triggers can be generated for these expressions. We have proven this rewrite technique correct in a theorem prover. Second, we make reasoning about potentially overlapping arrays easier, by providing specification constructs to indicate and verify that two arrays are not aliases, or that they are immutable, so they can be modelled as mathematical sequences. All our techniques are implemented in the \vercors program verifier. We illustrate how our techniques improve scalability through a large number of experiments. Using our techniques on a set of typical GPU kernels, we achieve a reduction of verification time by, on average, a factor of 9, with outliers being up to 150 times faster. Additionally, applying these techniques to earlier experiments and an earlier case study of a radio telescope pipeline permitted the verification of results which were previously unobtainable and significantly reduced the verification time.

\keywords{Deductive Verification \and Separation Logic \and Parallel \and GPU \and GPGPU \and Quantifiers}
\end{abstract}

%!TEX root = main.tex

\section{Introduction}\label{sec:intro}

\emph{Data} (or \emph{Data-level}) \emph{parallel programming} is an approach to parallel programming
in which multiple threads concurrently perform the same operations on different data elements~\cite{hillisDataParallelAlgorithms1986}.
Over the years, this approach to parallel programming has become increasingly popular, in particular in the form
of the \emph{Single Instruction, Multiple Data} (SIMD) paradigm. Modern \emph{Graphics Processing
Units} (GPUs) offer a prominent way of SIMD programming, but also \emph{Central Processing Units} (CPUs) support
SIMD instructions. GPU computing, and therefore data parallel programming, has had a major impact on scientific
computing in fields such as computational biology (e.g., genomics)~\cite{WieSpri12}, statistics~\cite{LiuTan12}, and physics (e.g., fluid dynamics)~\cite{BerBet11}. Furthermore, GPUs effectively accelerate computations involving matrix-vector and
matrix-matrix multiplications~\cite{GreLok11,WijsBos12}, and
they made an enormous impact on Artificial Intelligence with deep learning~\cite{deeplearning}.

% is an emerging paradigm.\todo{Lars: Can we still call it emerging? Is not well established nowadays? Also I think "data parallel" is more common than "data-level parallel".} It supports
%parallelisation across multiple processes in parallel computing
%environments, such that each process operates on its own part of the
%data. There is hardware tailored to support this specific
%programming model, such as GPU's, and this programming model is becoming more prominent.
%
Due to this success, these days, there
is a plethora of programming languages supporting this paradigm,
such as SYCL, OpenCL, CUDA and OpenMP. Typical data structures that
are used in these programs are arrays and matrices.

%There are many possible applications, ranging from physics to chemistry and
%biology, etc. Correctness is important here\todo{of course, this needs
%  to be have some evidence, better argumentation}, and therefore
%, deductive verification support is being developed to reason about this
%class of programs.

However, deductive verification of these programs, i.e., verifying that they
satisfy formal contracts, often runs into scalability
issues. Verification of a computation quickly takes (too) long
or becomes unfeasible. Bottlenecks are in the reasoning 
at the SMT level: the deductive verification tool applies program logics
that create SMT proof obligations.
We have identified multiple causes for this:
\begin{compactitem}
  \item Generated proof obligations often contain (multiple)
    nested quantifier expressions, and SMT solvers have difficulty
    reasoning about them as they often involve \emph{triggers}, i.e., patterns,
    that are not suitable for actually triggering an instantiation of the quantifiers.
    
    \item If a program uses multiple arrays, matrices, or higher dimensional array structures,
    then these could potentially be aliases of each other. This needs to be verified, as it
    may influence program correctness, but that leads to many additional
      proof obligations.

    \end{compactitem}

%\todo[caption={}]{
%\begin{enumerate}
%    \item Title, Abstract and Introduction: 2 pages
%    \item Background: 1 page
%    \item Quantifiers: 4 pages
%    \item Unique types: 5 pages
%    \item Experiments: 3 pages
%    \item Related Work: 1 page
%    \item Discussion and Future Work: 1 page
%    \item \textbf{total:} 17 pages (1 page to spare)
%    \item \textbf{CAV limit:} 18 pages
%\end{enumerate}
%}

    We introduce multiple solutions to address these causes, all of which are
 implemented in the deductive program verifier \vercors~\cite{armborstVerCorsVerifierProgress2024}. \vercors uses
 permission-based separation logic as its specification language, which means that for all shared
    memory, the user needs to specify permission annotations that
    capture whether a thread has read or write access at a certain
    moment. Moreover, all functional properties are framed, i.e. one
    can only state something about that part of the memory to which one
    has access.

    First of all, after the background has been explained in Section~\ref{sec:background},
    we propose a rewriting procedure in Section~\ref{sec:quantifiers} to
    flatten expressions with nested quantifiers into equivalent
    expressions with single quantifiers, for which triggers can be
    generated that make the verification process efficient. We have proven 
    the correctness of this procedure using \lean.

    Second, in Section~\ref{sec:types}, we propose a mechanism to indicate that
    arrays (and other data structures) are not aliases of each other. For this, we
    use \emph{uniqueness types}. This helps to significantly reduce
    the reasoning about potential aliases.

    % We do this soundly in cases 
    % This helps to significantly reduce the reasoning 
    % about potential aliases,
    
    % to indicate that we do not have 
    % to consider arrays as aliases in the verification context. This implies that
    % permissions for such array

    % arrays (and other data structures) are not aliases. For this, we
    % use uniqueness types, that capture that there is only a single
    % reference to a certain array. This helps to significantly reduce
    % the reasoning about potential aliases.
    % }

    Sometimes, multiple arrays may be aliases of each other, but they are never updated
    in the program. For this case, we propose to indicate
    that the arrays are immutable. By doing so, the deductive verifier can reason about
    them as if they are \emph{sequences}.
%    , where we do need to provide
%    and reason about permission specifications. 

%    Finally, we combine this with a recently implemented feature of
%    VerCors, called \textit{extract}, which we have further enhanced in this work for GPU kernels. This helps us break down the
%    verification of a larger GPU kernel into smaller parts. (Section~\ref{sec:extract})

    Furthermore, in Section~\ref{sec:experiments}, we discuss a large number of experiments that demonstrate
    how these
    features together speed up verification and verify previously unverifiable cases of a Radio Telescope Pipeline case study, including an implementation of the Padre algorithm~\cite{vandenhaakVerifyingRadioTelescope2024}.
    
    % With these optimisations, we could significantly reduce verification times and verify previously unverifiable cases of a Radio Telescope Pipeline case study~\cite{vandenhaakVerifyingRadioTelescope2024}.
    % \todo{This would be a point to
      % mention Padre} %(Section~\ref{sec:experiments})
    
    Finally, related work is discussed in Section~\ref{sec:related}, and conclusions are drawn in Section~\ref{sec:concl}.
    The data for the experiments (Section~\ref{sec:experiments}), the \lean proof (Section~\ref{sec:quantifiers}), and the version of the \vercors tool used in this paper can be found in an accompanying artefact at \url{http://github.com/cav2026-anonymous/Scalable-Deductive-Verification-of-Data-Level-Parallel-Programs}.

%!TEX root = main.tex

\section{Background}\label{sec:background}
Deductive verification tools use formal contracts, consisting of pre- and post-conditions, written as \emph{requires} and \emph{ensures} statements, respectively, to formally verify a program. To support concurrent programs,
the \vercors~\cite{armborstVerCorsVerifierProgress2024} tool uses first-order logic, enhanced with (concurrent) separation logic concepts~\cite{brookesSemanticsConcurrentSeparation2004}. For each memory location (or \textit{heap} location) \pvlinlineb|a|, one needs to specify a permission to access it: \pvlinlineb|Perm(a, p)|, with \pvlinlineb|p| a fractional number between \pvlinlineb|0| and \pvlinlineb|1|, where \pvlinlineb|p==1| indicates that we can write to location \pvlinlineb|a| and \pvlinlineb|0<p<1| indicates read permission. A permission of \pvlinlineb|1| is often written as \pvlinlineb|write|. Correct permission annotations allow \vercors to prove memory safety.
%In \vercors, most program statements are eventually modelled using \textit{inhale} and \textit{exhale} statements, constructs similar to \textit{assert} and \textit{assume}. However, they can also remove or add permission. For instance, the assignment \pvlinlineb|a=5| is modelled as \pvlinlineb|inhale Perm(a, write); exhale Perm(a,write) && a==5|, where \vercors first removes write permission, which fails if it is not available, then adds it back again and assumes that \pvlinlineb|a==5|.
To verify a program, \vercors translates it into \viper's~\cite{mullerViperVerificationInfrastructure2016,eilersFifteenYearsViper2025} intermediate verification language. In turn, \viper queries SMT solvers to solve the proof obligations.

%Besides normal functions, \vercors also has \pvlinlineb|pure| functions. These functions can depend on the heap but cannot have side-effects; i.e., change the contents of the heap. Since there are no side effects, these can be used in annotations. The function \pvlinlineb|pure int abs(int x) = x>0 ? x : -x;| is a typical example that calculates the absolute value.
%\todo{Lars: I think I've added everything which is strictly needed to understand the contents of the rest of the sections. Alter as you see fit, if you want to specialise it more to GPUs or introduce the GPU examples explained below.}

%\todo{maybe use this example to illustrate VerCors, as it shows why it's so important to handle this case. \\
%Lars: Sure sounds like a good idea. There is space left next to it, so we could introduce another example next to it, which does verify (as the current listing does not verify without the changes in this paper).}

% (lstinputlisting) Chapters/6-Vercors/swapexample1.cl
\begin{lstlisting}[float=t,style=pvl2, caption={An \OpenCL GPU kernel to swap the contents of two arrays, inspired by the \pvlinlineb|Xswap| kernel from CLBlast~\cite{nugterenCLBlastTunedOpenCL2018}.},label={lst:swap-cl}]
/*@ $\SP$... $\label{ln:swap-contract-1}$
    $\UP$context (\forall* int i; 0 <= i && i < n/get_global_size(0);$\label{ln:swap-write-1}$
      Perm(ygm[\gtid + i*get_global_size(0)], write)) &&
      Perm(xgm[\gtid + i*get_global_size(0)], write)); $\label{ln:swap-write-2}$
    $\UP$ensures (\forall int i; 0 <= i && i < n/get_global_size(0); $\label{ln:swap-ensure-1}$
      ygm[\gtid + i*get_global_size(0)] == \old(xgm[\gtid + i*get_global_size(0)]) &&
      xgm[\gtid + i*get_global_size(0)] == \old(ygm[\gtid + i*get_global_size(0)])); $\label{ln:swap-ensure-2}$
@*/ $\label{ln:swap-contract-2}$
// Swap the contents of two arrays in parallel.
__kernel void XswapFast(const int n, __global float* xgm, __global float* ygm) {
  /*@ $\SP$loop_invariant \gtid <= id && id < n + get_global_size(0);
      $\UP$loop_invariant (\forall* int i; 0 <= i && i < n/get_global_size(0);
        Perm(ygm[\gtid + i*get_global_size(0)], write)) &&
        Perm(xgm[\gtid + i*get_global_size(0)], write));
      $\UP$loop_invariant (\forall int i; id/get_global_size(0) <= i && i < n/get_global_size(0);
        ygm[\gtid + i*get_global_size(0)] == \old(ygm[\gtid + i*get_global_size(0)]) &&
        xgm[\gtid + i*get_global_size(0)] == \old(xgm[\gtid + i*get_global_size(0)]));
      $\UP$loop_invariant (\forall int i; 0 <= i && i < id/get_global_size(0);
        ygm[\gtid + i*get_global_size(0)] == \old(xgm[\gtid + i*get_global_size(0)]) &&
        xgm[\gtid + i*get_global_size(0)] == \old(ygm[\gtid + i*get_global_size(0)]));
  @*/
  for (int id = get_global_id(0); id < n; id += get_global_size(0)) { $\label{ln:swap-grid-1}$
    float temp = xgm[id];
    xgm[id] = ygm[id];
    ygm[id] = temp;
  } $\label{ln:swap-grid-2}$
}
\end{lstlisting}

Listing~\ref{lst:swap-cl} presents an example of an \OpenCL GPU kernel that can be used to swap the contents of two arrays
\pvlinlineb|xgm| and \pvlinlineb|ygm|, both of size \pvlinlineb|n|, in parallel.
The special comments \pvlinlineb|/*@| and \pvlinlineb|@*/| indicate annotations. 
The contract for this function addresses
that the array contents are indeed swapped: at lines \ref{ln:swap-write-1}--\ref{ln:swap-write-2} (l.\ref{ln:swap-write-1}--\ref{ln:swap-write-2}), the \emph{context} is given that the kernel has write permission for both
arrays (\pvlinlineb|context| refers to the fact that this holds both before and after executing the kernel), and at l.\ref{ln:swap-ensure-1}--\ref{ln:swap-ensure-2}, the postcondition
is given that the array contents have been swapped, with \pvlinlineb|\old| referring to the old contents of the arrays
before the kernel was launched.
A number of keywords and functions are used to refer to the thread hierarchy on a GPU.
Using CUDA terminology, threads are executed in \emph{blocks} of a predefined size, and a predefined number of
blocks make up a \emph{grid}. Both blocks and grids can be one-, two-, or three-dimensional. In this example, we use only
one dimension. We use the following functions, with \pvlinlineb|i| referring to a dimension ($0 \leq \pvm{i} \leq 2$):
\begin{compactitem}
\item \pvlinlineb|get_local_size(i)| returns the size of a thread block in the \pvlinlineb|i|-dimension.
\item \pvlinlineb|get_num_blocks(i)| returns the total number of blocks in the \pvlinlineb|i|-dimension.
\item \pvlinlineb|get_local_id(i)| returns the (block-local) ID of the current thread in the \pvlinlinec|i|-dimension. Note that
$0 \leq$ \pvlinlineb|get_local_id(i)| $<$ \pvlinlineb|get_local_size(i)|.
\item \pvlinlineb|get_block_id(i)| returns the ID of the block in which the thread resides in the \pvlinlineb|i|-dimension. Note that
$0 \leq$ \pvlinlineb|get_block_id(i)| $<$ \pvlinlineb|get_num_blocks(i)|.
\item \pvlinlineb|get_global_id(i)| returns the \emph{global} ID of the thread in the \pvlinlineb|i|-dimension,
defined as
\pvlinlineb|get_local_size(i)| $\cdot$ \pvlinlineb|get_block_id(i)| $+$ \pvlinlineb|get_local_id(i)|.
\item \pvlinlineb|get_global_size(i)| returns the total number of threads in the \pvlinlineb|i|-dimension, defined
as \pvlinlineb|get_num_blocks(i)| $\cdot$ \pvlinlineb|get_local_size(i)|.
\end{compactitem}

Finally, \pvlinlineb|\gtid| is equal to \pvlinlineb|get_global_id(0)| for the one dimensional case.

Note that these functions are used to divide the work among the threads: Initially, every thread accesses those
elements in \pvlinlineb|xgm| and \pvlinlineb|ygm| located at index \pvlinlineb|get_global_id(0)|. As the arrays can
be larger than the total number of threads, the for-loop at l.\ref{ln:swap-grid-1}--\ref{ln:swap-grid-2} is actually a \emph{grid-stride} loop, which is very common in GPU kernels: after accessing
their initial element, every thread jumps \pvlinlineb|get_global_size(0)| positions ahead in the arrays to access the
next elements, and this continues until the end of the arrays has been reached. To reason about the loop, note the
\emph{loop invariants} at l.11--21 that also use the functions mentioned above. These are essential for \vercors
to prove the contract of the kernel.

% (lstinputlisting) Chapters/6-Vercors/swapexample1.pvl
\begin{lstlisting}[float=t,style=pvl2, caption={A GPU PVL kernel to swap the contents of two arrays, with the number of threads being at least as large as the number of elements.},label={lst:swap-pvl}]
/*@ $\SP$...
    $\UP$ensures \forall int i; 0 <= i && i < n*m; xgm[i] == \old(ygm[i]) && ygm[i] == \old(xgm[i]);
@*/
void XswapFastSmallArrays(const int n, float* xgm, float* ygm, int nb, int nt) {
  par blocks (int bid = 0 .. nb)
  {
    /*@ $\SP$...
         $\UP$ensures \forall int i; 0 <= i && i < n; xgm[bid*nt + i] == \old(ygm[bid*nt + i]);
         $\UP$ensures \forall int i; 0 <= i && i < n; ygm[bid*nt + i] == \old(xgm[bid*nt + i]); @*/
    par threads (int tid = 0 .. nt)
      /*@ $\SP$...
          $\UP$ensures xgm[bid*nt + tid] == \old(ygm[bid*nt + tid]);
          $\UP$ensures ygm[bid*nt + tid] == \old(xgm[bid*nt + tid]); @*/
    {
      float temp = xgm[bid*nt + tid];
      xgm[tid] = ygm[bid*nt + tid];
      ygm[tid] = temp;
    }
  }
}
\end{lstlisting}

\vercors encodes parallel functions, such as GPU kernels, using \emph{parallel blocks}.
%According to Blom et al. (Theorem 3), to check the correctness of a
To verify a GPU kernel, one needs to quantify over all thread blocks in a grid and all threads in each thread block~\cite{blomSpecificationVerificationGPGPU2014},
resulting in two nested parallel blocks. We have made this explicit in Listing~\ref{lst:swap-pvl}, using \vercors'
intermediate language \pvl. For convenience, the grid-stride loop has been removed. Note that the contract for the
parallel block labelled \pvlinlineb|blocks| quantifies over all threads in a block, and that the contract for the kernel
in turn quantifies over all threads in the grid. Although given here explicitly, \vercors typically derives these contracts
 automatically
from a given contract for a thread block, which is possible due to the very structured way in which GPU threads operate.
For the kernel in Listing~\ref{lst:swap-cl}, \vercors would also internally generate these contracts, given the contract
at l.\ref{ln:swap-contract-1}--\ref{ln:swap-contract-2}.

\paragraph*{Triggers.}
Contracts for deductive verification frequently involve \emph{quantifiers}, for instance in a statement such as
\pvlinlineb{\forall int i; 0<=i< n; A[i]== 0}. During verification, a deductive verification tool needs to apply such a
statement whenever applicable, for instance when encountering the ground term \pvlinlineb|A[1]| in the code: the quantifier must be
\emph{instantiated}.  Mapping \pvlinlineb|1| on \pvlinlineb|i| is straightforward, but for more complex expressions,
heuristics are typically required, although they often lead to incompleteness of the verification. Therefore, \vercors relies on pattern-based quantifier instantiation~\cite{detlefsSimplifyTheoremProver2005} (or E-matching): the user should indicate that an expression
in an annotation should serve as a \emph{trigger} (or \emph{pattern}) for quantification, by enclosing it in \pvlinlineb|{:| and \pvlinlineb|:}|. This trigger (or possibly a set of triggers) should mention all quantified variables. Furthermore, \viper's intermediate verification language does not allow triggers to involve arithmetic expressions.%\footnote{This restriction can be circumvented by introducing additional structure and replacing the arithmetic with functions. However, this precise structure always needs to be present to instantiate the triggers.}.\todo{MH: Deze voetnoot wordt in de volgende paragraaf ook al benoemd. Kan misschien weg?}

For instance, consider the code and contract
in Listing~\ref{lst:triggers-pvl}. At l.\ref{ln:trigger}, \pvlinlineb|f(k,y)| is marked as
a trigger.
The assertion at l.\ref{ln:inst} verifies, as \pvlinlineb|f(x,y)| matches the trigger at l.\ref{ln:trigger}.
However, the assertion at l.\ref{ln:no-inst} fails to verify, as in order to match \pvlinlineb|f(10,z)| needs the information of the quantifier in l.\ref{ln:no-trigger}. However, it is not allowed to specify \pvlinlineb|f(k*i,y)| as a trigger, as it contains arithmetic. Even if we introduce a helper function for \pvlinlineb|*|, such that it can be used as a trigger, it cannot match suitable instances for \pvlinlineb|i| and \pvlinlineb|k| when given \pvlinlineb|f(10)|, as this would need to be written as \pvlinlineb|f(5*2)|, \pvlinlineb|f(2*5)|, \pvlinlineb|f(1*10)| or \pvlinlineb|f(10*1)|.

% (lstinputlisting) Chapters/6-Vercors/triggerexample.pvl
\begin{lstlisting}[float=t,style=pvl2, caption={A PVL function example to illustrate triggers. Here \pvlinlineb|f| is some uninterpreted side-effect free function.},label={lst:triggers-pvl}, linerange={3-9}]
/*@ $\SP$requires x>0 && y>0;
       $\UP$requires (\forall int k; 0<=k && k<=x ; {: f(k, y) :} == 0); $\label{ln:trigger}$
       $\UP$requires (\forall int k, int i; 0<=k && k<=x && 0<=i && i<=y; f(k*i, z) == 0); @*/ $\label{ln:no-trigger}$
void bar(int x, int y, int z) {
    assert f(x, y) == 0;     $\label{ln:inst}$// this verifies, as "f(x,y)" triggers the first quantifier
    assert f(10, z) == 0;   $\label{ln:no-inst}$// this assertion fails, as no quantifier is triggered
}
\end{lstlisting}
%!TEX root = main.tex

\newcommand{\base}[0]{\mathit{base}}
\newcommand{\off}[0]{\mathit{off}}
\newcommand{\xmin}[0]{\mathit{min}}
\newcommand{\xmax}[0]{\mathit{max}}

\section{Triggers for Nested Quantifiers}\label{sec:quantifiers}

%\paragraph*{The problem.}

As data-level parallel programs typically involve data stored in arrays, their contracts usually contain annotations that
quantify over the data elements. To reason about these, the underlying automated solver used by a deductive verification
tool needs to automatically find instantiations of those quantifiers.
%In general, this is a hard problem, and therefore \textit{triggers} have been introduced so that the user can hint to the prover about an appropriate instantiation~\cite{detlefsSimplifyTheoremProver2005}.
%% In \vercors, a user can write \pvlinline|{:p:}| inside a quantifier to indicate that the expression \pvlinline|p| should be used as a trigger. The underlying solvers will then instantiate the quantifier when an expression with a similar structure of \pvlinline|p| is found.
%Unfortunately, triggers have limitations in how they can be used, in particular when a quantification in the contract does
%not completely match the relevant code.
%%for example when reasoning about flattened multi-dimensional arrays.
To illustrate this, consider Listing~\ref{lst:swap-cl} again, in particular, the postcondition given at l.5--6. We
indicate that the index expression for \pvlinlineb|ygm| should be used as a trigger:
%following expression about a 2-dimensional array $A$.
% (lstinputlisting) Chapters/6-Vercors/quantifiers1.pvl
\begin{lstlisting}[style=pvlcomments, firstline=1,lastline=2, numbers=none]
\forall $\SP$int i; 0 <= i < n/get_global_size(0); {:ygm[\gtid + i*get_global_size(0)]:} ==
$\UP$\old(xgm[\gtid + i*get_global_size(0)])
\end{lstlisting}
When trying to verify the kernel, the solver tries to find suitable instantiations of the quantified variables in the given postcondition, but is not able to do this, as the index expression contains arithmetic
operations.
%Here, \pvlinlineb|A[x_2][x_1]| works as a trigger: when the solver encounters \pvlinlineb|A[2][2]|, this trigger tells it to instantiate the quantifier with \pvlinlineb|x_2 =>> 2| and \pvlinlineb|x_1 =>> 2| to verify its validity.

%However, if we flatten the array, the equivalent expression becomes
%\lstinputlisting[style=pvl2, firstline=8,lastline=8]{Chapters/6-Vercors/quantifiers.pvl}
%where \pvlinlineb|A_f| is the flattened version of \pvlinlineb|A|. Now, we would like to use \pvlinlineb|A_f[4*x_2+ x_1]| as a trigger, but this is not a valid trigger as it contains arithmetic. Moreover, if we would like to use this as a trigger, we would need a way to translate the indexed value in an expression like \pvlinlineb|A_f[10]>0| back to \pvlinlineb|x_1| and \pvlinlineb|x_2|.

The solution that we develop here is to automatically rewrite the quantifier expression in such a way that the arithmetic operations are
removed from the trigger.
We do this by defining a bijective mapping from a linear arithmetic expression containing quantified variables to a single quantified variable, and adding suitable
conditions, such that the validity of the expression remains unchanged. For this concrete example,
assuming for the moment that \pvlinlineb|\gtid| is a given value instead of one calculated using several values,
this gives the following rewritten quantifier.
%\lstinputlisting[style=pvlcomments, firstline=3,lastline=4, numbers=none]{Chapters/6-Vercors/quantifiers1.pvl}
%Our rewriting procedure, which we further explain in this section, rewrites this to the following quantifier.
% (lstinputlisting) Chapters/6-Vercors/quantifiers1.pvl
\begin{lstlisting}[style=pvlcomments, firstline=5,lastline=6, numbers=none]
\forall $\SP$int x; (abs(x-\gtid))%get_global_size(0)==0 && 0 <= x-\gtid < get_global_size(0)*(n/
$\UP$get_global_size(0)); ygm[x] == \old(xgm[x]);
\end{lstlisting}
After the rewriting, \pvlinlineb|ygm[x]| can be used as a trigger. In this section, we explain our rewriting
procedure.
In general, if an index is computed injectively, then we can apply this rewriting by determining an inverse function.
If we have an indexing function and its inverse, we can translate between the original and the rewritten quantifier.

%In this section, we focus on quantifiers over arrays.
%However, the approach can also be used for functions or other data structures accessed in a similar linear way. 

\paragraph{Rewriting quantifiers.}
%Quantifiers that index an array $A$ using index expressions with arithmetic operations should be rewritten.
In general, the quantifiers we wish to rewrite are of the following form, with $\pvm{A}$ an array.
%To define the bijection that transforms an index expression into a single variable, we first precisely define the index patterns for which our quantifier rewriting approach can be applied automatically.
%The properties we are interested in are similar to our previous example, but we generalise the bounds and index expressions, as well as the number of variables that we have.  This results in the following property pattern.
%Consider a generalisation of our previous example without using specific values for bounds and indexing.
% \lstinputlisting[style=pvl2, firstline=71,lastline=71]{Chapters/6-Vercors/quantifiers.pvl}
% This can be rewritten into the following expression
% \lstinputlisting[style=pvl2, firstline=74,lastline=74]{Chapters/6-Vercors/quantifiers.pvl}
% provided that \pvlinlineb|n_1>0|, \pvlinlineb|n_2>0|, \pvlinlineb|a_1>0| and \pvlinlineb|a_2>0| hold, and that \pvlinlineb|a_1*x_1 <= a_2| holds for every value that \pvlinlineb|x_1| can take. In that case, the  inverse for the indexing function \pvlinlineb|(x_1, x_2) =>> a_2*x_2+a_1*x_1| is 
% \pvlinlineb|x =>> (x%a_2, x/a_1)|. 

% Next, we generalise  to an arbitary number of variables.
% (lstinputlisting) Chapters/6-Vercors/quantifiers1.pvl
\begin{lstlisting}[style=pvlcomments, firstline=7,lastline=7, numbers=none]
\forall int x_1, ..., x_k; X(x_1, ..., x_k); R(A[a_k*x_k + ... + a_1*x_1 + b], (x_1, ..., x_k))
\end{lstlisting}

The predicate \pvlinlineb|X(x_1, ..., x_k)| determines the \textit{domain} of the quantifier, i.e., all values of \pvlinlineb|(x_1, ..., x_k)| for which the quantifier holds.
%We refer with $X \subseteq \mathds{Z}^k$ to this domain.
Furthermore,
\pvlinlineb|R| is a predicate with free variables $\pvm{x}_1, \ldots, \pvm{x}_k$ that may involve conditions on the $\pvm{x}_i$, besides the definition of their domains and the index expression.
Based on the index expression in the quantifier form given above, we define a function for array indexing:
\[
f(\pvm{x}_1, \ldots, \pvm{x}_k) = \pvm{a}_k\cdot \pvm{x}_k + \cdots + \pvm{a}_1\cdot \pvm{x}_1 + \pvm{b} = \sum_{i=1}^k \pvm{a}_i \cdot \pvm{x}_i + \pvm{b} 
\]
Reversing this indexing would mean that we could use a statement such as \pvlinlineb|A[x]| as a trigger, and map \pvlinlineb|x|
back to the corresponding \pvlinlineb|(x_1, ..., x_k)| that satisfies $f(\pvm{x_1}, \ldots, \pvm{x_k}) = \pvm{x}$.
To achieve this, we need to define the inverse function $f^{-1}$. For this to work, it must be that each $\pvm{x}_i$ has an (inclusive) lower bound $\xmin_i$ defined in the domain $X$. When the domain also has an (exclusive) upper bound for an $\pvm{x}_i$, we denote this by $\xmax_i$. The upper bound $\xmax_\pvm{k}$ for \pvlinlineb|x_k| must be defined,
to bound the overall quantification. With $n_k = \xmax_k - \xmin_k$, we refer to the size of the $\pvm{x}_k$-dimension. All in all, this means that \pvlinlineb|X(x_1,...,x_k)| should be defined as follows.
\[
\pvm{X(x}_1\pvm{,...,x}_k\pvm{) = C(x}_1\pvm{, ..., x}_k\pvm{)} \wedge \bigwedge_{i=1}^k (\xmin_{i} \leq \pvm{x}_{i}) \wedge
 \pvm{x}_k < \xmax_\pvm{k} 
 %\numberthis
\]
% For the first and second pattern, domain $X$ also has the constraint $\bigwedge_{i=1}^{k} x_i < \xmax_i$, instead of only for $x_k$.
Here, \pvlinlineb|C(x_1,..., x_k)| is a predicate expressing additional constraints on the variables, besides the
constraints for the bounds. With $\pvm{X}$ as above, we wish to rewrite the original quantifier to an equivalent
quantifier with a single variable, of the following form.
% (lstinputlisting) Chapters/6-Vercors/quantifiers1.pvl
\begin{lstlisting}[style=pvlcomments, firstline=8,lastline=8, numbers=none, aboveskip=1ex, belowskip=1ex]
\forall int x; Y(x); R(A[x], $\smash{f^{-1}}$(x))
\end{lstlisting}

Here, the predicate \pvlinlineb|Y(x)| determines the domain of the new quantifier.
%, and \pvlinlineb|A[x]| serves as the
%trigger.
%We refer with $Y \in \mathds{Z}$ to this domain.
To define $f^{-1}$, we first define an offset $\off$ and a helper function $\base_i$.
Intuitively, $\off$ is the value where we start indexing array $A$, which is the position related to the case where
each $\pvm{x}_i$ has its lowest possible value. The function $\base_i$ maps $\pvm{x}$ back to $\pvm{a}_i \cdot \pvm{x}_i$, meaning that we can retrieve $\pvm{x}_i$ by computing $\base_i(\pvm{x})/ |\pvm{a}_i|$.
\begin{align*}
\off &= \sum_{i=1}^k \pvm{a}_i\cdot \xmin_i + \pvm{b} \\ %\label{eq:off} \\
\base_i(\pvm{x}) &= \begin{cases} 
|\pvm{x} - \off| & \text{if } i = k, \\
\base_{i+1}(\pvm{x}) \bmod \pvm{a}_{i+1} & \text{if } 1 \leq i < k.\end{cases} %\label{eq:base}
\end{align*}

Next, we can define the function $f^{-1}$ as follows.
\begin{align*}
f^{-1}(\pvm{x}) = (\base_1(\pvm{x})/ |\pvm{a}_1| + \xmin_1, \ldots, \base_k(\pvm{x})/ |\pvm{a}_k|+ \xmin_k) %\label{eq:finv}
\end{align*}
This rewriting is correct, provided that
\begin{align}
n_k > 0 \numberthis \label{eq:pat3-1}
\end{align}
and the following conditions hold for all
 \(\pvm{a}_i\) ($1 \leq i \leq k$).
\begin{align*}
& \pvm{a}_i \neq 0 \numberthis \label{eq:pat3-2} \\ 
%& n_k > 0 \numberthis \label{eq:pat3-2} \\ 
i < k \Rightarrow &~ \pvm{a}_i \geq 0 \iff \pvm{a}_{i+1} \geq 0 \numberthis \label{eq:pat3-3} \\
i < k \Rightarrow & \forall (x_1, \ldots, x_k) \in X, 
  \sum_{j=1}^{i} |\pvm{a}_j|\cdot (\pvm{x}_j-\xmin_j) < |\pvm{a}_{i+1}| \numberthis \label{eq:pat3-4}
\end{align*}
%Here $\text{sgn}$ is the sign of the number: either being positive or negative.

Finally, we define $\pvm{Y(x)}$ as follows.
\begin{align*}
\pvm{Y(x)} = & \ \pvm{C(}f^{-1}\pvm{(x))} \wedge \base_1(\pvm{x}) \bmod \pvm{a}_1 = 0\ \wedge \\%\numberthis \label{eq:domainY} \
	& (\pvm{a}_1>0 \Rightarrow 0 \leq \pvm{x} - \off < \pvm{a}_k\cdot n_k ) \wedge 
	 (\pvm{a}_1<0 \Rightarrow \pvm{a}_k\cdot n_k < \pvm{x} - \off \leq 0)
\end{align*}

The following theorem states the correctness of our rewriting.
\begin{theorem}\label{thm:eq-quant}
Given a quantifier of the form \pvlinlineb|\forall| \emph{\pvlinlineb|int x_1, ..., x_k; X(x_1, ..., x_k); R(A[a_k*x_k + ... + a_1*x_1 + b], (x_1, ..., x_k))|}, with $k > 0$. If equations~\ref{eq:pat3-1}--\ref{eq:pat3-4} hold, then 
\begin{align*}
\forall (\pvm{x}_1, \ldots, \pvm{x}_k) \in \mathds{Z}^k. \pvm{X}(\pvm{x}_1, \ldots, \pvm{x}_k) &\Rightarrow R(A[f(\pvm{x}_1, \ldots, \pvm{x}_k)], (\pvm{x}_1, \ldots, \pvm{x}_k)) \\
&= \\
\forall \pvm{x} \in \mathds{Z}. \pvm{Y(x)} 
&\Rightarrow R(A[\pvm{x}], f^{-1}(\pvm{x}))
\end{align*}
with $f$ and $f^{-1}$ as defined above.
\end{theorem}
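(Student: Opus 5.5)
The plan is to show that $f$, restricted to the domain $X$, is a bijection onto the set of integers satisfying $\pvm{Y}$, with inverse $f^{-1}$. Concretely, I will prove the following two facts:
\begin{itemize}
\item[(a)] for every $(\pvm{x}_1,\ldots,\pvm{x}_k)$ with $\pvm{X}(\pvm{x}_1,\ldots,\pvm{x}_k)$, both $\pvm{Y}(f(\pvm{x}_1,\ldots,\pvm{x}_k))$ and $f^{-1}(f(\pvm{x}_1,\ldots,\pvm{x}_k))=(\pvm{x}_1,\ldots,\pvm{x}_k)$ hold;
\item[(b)] for every $\pvm{x}$ with $\pvm{Y}(\pvm{x})$, both $\pvm{X}(f^{-1}(\pvm{x}))$ and $f(f^{-1}(\pvm{x}))=\pvm{x}$ hold.
\end{itemize}
Given (a) and (b), the equality of the two quantified formulas is routine.
\begin{itemize}
\item \emph{Left implies right.} Take $\pvm{x}$ with $\pvm{Y}(\pvm{x})$ and instantiate the left quantifier at $f^{-1}(\pvm{x})$. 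Then (b) gives $A[f(f^{-1}(\pvm{x}))]=A[\pvm{x}]$.
\item \emph{Right implies left.} Take a tuple in $X$ and instantiate the right quantifier at $\pvm{x}=f(\pvm{x}_1,\ldots,\pvm{x}_k)$. Then (a) turns $f^{-1}(\pvm{x})$ back into the original tuple.
\end{itemize}
Throughout, I interpret $\bmod$ and $/$ as Euclidean operations. With this convention, $u \bmod \pvm{a} = u \bmod |\pvm{a}|$ lies in $[0,|\pvm{a}|)$, and $u = |\pvm{a}|\cdot(u/|\pvm{a}|) + u \bmod \pvm{a}$ for $u\ge 0$; this matches Lean's \texttt{Int.emod}.

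\textbf{Preliminaries.} By (\ref{eq:pat3-2}) and (\ref{eq:pat3-3}), an induction on $i$ shows that all $\pvm{a}_i$ share one sign $s\in\{1,-1\}$. Writing $d_i=\pvm{x}_i-\xmin_i$, we get
\[
f(\pvm{x}_1,\ldots,\pvm{x}_k)-\off = s\sum_{i=1}^k|\pvm{a}_i|\, d_i .
\]

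\textbf{Direction (a).} Assume $\pvm{X}(\pvm{x}_1,\ldots,\pvm{x}_k)$, so every $d_i\ge 0$ and $d_k<n_k$. Then $\base_k(f(\ldots))=S_k$, where $S_i=\sum_{j\le i}|\pvm{a}_j|d_j$.
\begin{itemize}
\item \emph{Base values.} By downward induction, using (\ref{eq:pat3-4}) in the form $0\le S_i<|\pvm{a}_{i+1}|$, we have $\base_i = S_{i+1}\bmod \pvm{a}_{i+1} = (|\pvm{a}_{i+1}|d_{i+1}+S_i)\bmod|\pvm{a}_{i+1}| = S_i$.
\item \emph{Recovering the coordinates.} Since $S_i=|\pvm{a}_i|d_i+S_{i-1}$ with $0\le S_{i-1}<|\pvm{a}_i|$ (trivially $S_0=0$), we get $\base_i/|\pvm{a}_i|=d_i$. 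This yields $f^{-1}(f(\ldots))=(\pvm{x}_1,\ldots,\pvm{x}_k)$.
\item \emph{The conditions in $\pvm{Y}$.} $\pvm{C}$ holds after substituting $f^{-1}(f(\ldots))$. Also $\base_1=|\pvm{a}_1|d_1$ is divisible by $\pvm{a}_1$. For the range, $S_k=|\pvm{a}_k|d_k+S_{k-1}<|\pvm{a}_k|(d_k+1)\le|\pvm{a}_k|n_k$. Combined with the sign $s$, this gives the range condition in $\pvm{Y}$; here $\pvm{a}_1$ and $\pvm{a}_k$ have the same sign, and $n_k>0$ ensures the interval is nonempty.
\end{itemize}

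\textbf{Direction (b).} Assume $\pvm{Y}(\pvm{x})$ and let $e_i=\base_i(\pvm{x})/|\pvm{a}_i|$.
\begin{itemize}
\item \emph{Bounds.} The lower bounds $\pvm{x}_i=e_i+\xmin_i\ge\xmin_i$ hold because every $\base_i$ is nonnegative. For the upper bound, the range condition gives $\base_k=|\pvm{x}-\off|<|\pvm{a}_k|n_k$, hence $e_k<n_k$. $\pvm{C}$ holds directly from $\pvm{Y}$.
\item \emph{Reconstructing $\pvm{x}$.} For $i<k$, the Euclidean division identity gives $\base_{i+1}=|\pvm{a}_{i+1}|e_{i+1}+\base_i$. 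At the bottom, $\base_1 \bmod \pvm{a}_1=0$ gives $\base_1=|\pvm{a}_1|e_1$. Telescoping, $|\pvm{x}-\off|=\sum_i|\pvm{a}_i|e_i$.
\item \emph{Sign.} The sign constraints in $\pvm{Y}$ give $\pvm{x}-\off=s\,|\pvm{x}-\off|$. Hence $f(f^{-1}(\pvm{x}))=\off+s\sum_i|\pvm{a}_i|e_i=\pvm{x}$.
\end{itemize}
Note that (\ref{eq:pat3-4}) is not needed in this direction, which matters because it is only assumed on $X$.

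\textbf{Expected obstacle.} I expect the main difficulty to be the bookkeeping of integer division and modulo with possibly negative coefficients. This means consistently reducing to $|\pvm{a}_i|$, and handling the sign $s$ together with the two sign-dependent range conjuncts of $\pvm{Y}$. I will first prove a small lemma that, for $u\ge 0$, $0\le r<|\pvm{a}|$ and $\pvm{a} \neq 0$, the quantity $|\pvm{a}|q+r$ has quotient $q$ and remainder $r$. I will then use it in both the downward induction of (a) and the telescoping of (b).
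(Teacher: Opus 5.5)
Your proof is correct and follows essentially the paper's route. Like the paper, you show that $f$ and $f^{-1}$ are mutual inverses mapping $X$ into $Y$ and $Y$ into $X$, and then obtain the equality of the two quantifiers by instantiating each side at the corresponding image point. The only difference is organisational: the paper's Lean proof first treats the normalised case ($C$ trivially true, $\mathit{min}_i=0$, $b=0$) and lifts it by function composition and by restricting both domains with $C$, whereas you handle the offsets and $C$ directly through the shifted variables $d_i=x_i-\mathit{min}_i$.
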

\begin{proofsketch}
A full proof has been written in about 2500 lines of \textsc{Lean 4}~\cite{mouraLean4Theorem2021} code, and can be
found in the accompanying artefact of the current paper. For details on this, see Appendix~\ref{sec:appendix-lean-proof}. Here, we provide a sketch of how this proof is structured. With $X$ and $Y$, we refer to the domains
determined by predicates \pvlinlineb|X(x_1,...,x_k)| and $\pvm{Y(x)}$, respectively.
Initially, the theorem can be proven for the case that \pvlinlineb|C(x_1,...,x_k)| $ = \textit{true}$, $\xmin_i = 0$ for all
$i$ and $\pvm{b} = 0$. This implies that $\off = 0$. In this case, it can be proven that $\forall \pvm{x} \in Y. f(f^{-1}(\pvm{x})) = \pvm{x}$ and that $\forall (\pvm{x}_1,\ldots,\pvm{x}_k) \in X. f^{-1}(f(\pvm{x}_1,\ldots,\pvm{x}_k))= (\pvm{x}_1,\ldots,\pvm{x}_k)$. This proves that $f$ and $f^{-1}$ are inverses on $X$ and
$Y$.

Next, we prove that the image of $f$ under $X$ is contained in $Y$, and vice versa for $f^{-1}$. This, together with
the first part of the proof, proves that $f$ is a bijection between $X$ and $Y$.

Subsequently, this is generalised to the case where $\pvm{b}$, $\off$ and the $\xmin_i$ have arbitrary values. We prove that
the definitions of $f$ and $f^{-1}$ can be constructed from the simpler ones used at the start of the proof,
using function composition. Then, we prove that the function compositions are also inverses and bijections.

Then, we prove that the theorem is still correct for an arbitrary predicate \pvlinlineb|C(x_1,...,x_k)|. As this predicate
introduces the same constraints for both $X$ and $Y$, both domains are restricted in the same way.

Finally, we prove that $f$ and $f^{-1}$ being inverses and $f$ being a bijection between $X$ and $Y$ implies
the equality of the two quantifiers. \hfill\(\qed\)
\end{proofsketch}

Next, we revisit the example taken from Listing~\ref{lst:swap-cl}.

\begin{example}
Consider again the following quantifier.
% (lstinputlisting) Chapters/6-Vercors/quantifiers1.pvl
\begin{lstlisting}[style=pvlcomments, firstline=1,lastline=2, numbers=none]
\forall $\SP$int i; 0 <= i < n/get_global_size(0); {:ygm[\gtid + i*get_global_size(0)]:} ==
$\UP$\old(xgm[\gtid + i*get_global_size(0)])
\end{lstlisting}
Note that it matches the linear quantifier pattern, with $\pvm{b} = \gtid$, $\pvm{x}_1 = \pvm{i}$, $\pvm{a}_1 = $\pvlinlineb|get_global_size(0)|, $\xmin_1 = 0$ and $\xmax_1 = $\pvlinlineb|n/get_global_size(0)|.
Furthermore, there is no $\pvm{C}$-predicate.
From this, we can derive that $\off = \gtid$ and $\base_1(\pvm{x}) = |\pvm{x} - \gtid|$ and
$n_1 = $\pvlinlineb|n/get_global_size(0)|, leading to
$\pvm{Y(x)} = |\pvm{x} - \gtid|\ \bmod\ $\pvlinlineb|get_global_size(0)| $= 0 \wedge 0 \leq \pvm{x} - \gtid <\ $\pvlinlineb|get_global_size(0)| $\cdot$\ \pvlinlineb|(n/get_global_size(0))|. As previously stated, this produces the following
new quantifier.
% (lstinputlisting) Chapters/6-Vercors/quantifiers1.pvl
\begin{lstlisting}[style=pvlcomments, firstline=5,lastline=6, numbers=none]
\forall $\SP$int x; (abs(x-\gtid))%get_global_size(0)==0 && 0 <= x-\gtid < get_global_size(0)*(n/
$\UP$get_global_size(0)); ygm[x] == \old(xgm[x]);
\end{lstlisting}
The second condition may be confusing, but note that with integer division, \pvlinlineb|get_global_size(0)*(n/get_global_size(0))| is not necessarily equal to \pvlinlineb|n|. Furthermore, note that $\pvm{x}$ represents the original
index expression \pvlinlineb|\gtid + i*get_global_size(0)|, hence \pvlinlineb|x - \gtid| equals \pvlinlineb|i*get_global_size(0)|. From this, we can derive that \pvlinlineb|0 <= i*get_global_size(0) < get_global_size(0)*| \pvlinlineb|(n/get_global_size(0))|,
which implies \pvlinlineb|0 <= i < n/get_global_size(0)|, i.e., the condition for $\pvm{i}$ in the original
quantifier.
\end{example}

Besides the use of block and thread IDs for array accessing, another cause for quantifier issues is the
\emph{flattening} of multi-dimensional arrays, such as matrices. To achieve regular memory access patterns, it is
generally a good strategy to store matrices in one-dimensional arrays, sometimes even multiple matrices in
a single array. In the following example, we address
such a situation.

\begin{figure}[t]
\centering
\begin{tikzpicture}[scale=0.35]

\foreach \k in {0,1} {
\foreach \j/\c in {0/red,1/blue}{
  \foreach \i in {0,1,2,3} {
    \filldraw[color=black, fill=\c!20, thick] ($(6*\k+4+\i,-\j)$) rectangle ($(6*\k+4+\i+1,-\j-1)$);
    \pgfmathparse{10*\k+4*\j+\i}
    \draw  ($(6*\k+4+\i+0.5,-\j-0.5)$) node {\pgfmathprintnumber[]{\pgfmathresult}};
  }
 }
}

\foreach \k in {0,1} {
\foreach \j/\c in {2/yellow}{
  \foreach \i in {0,1} {
    \filldraw[color=black, fill=\c!20, thick] ($(6*\k+4+\i,-\j)$) rectangle ($(6*\k+4+\i+1,-\j-1)$);
    \pgfmathparse{10*\k+4*\j+\i}
    \draw  ($(6*\k+4+\i+0.5,-\j-0.5)$) node {\pgfmathprintnumber[]{\pgfmathresult}};
  }
 }
}

\foreach \k in {0,1} {
\foreach \j/\c in {2/black}{
  \foreach \i in {2,3} {
    \filldraw[color=black, fill=\c!20, thick] ($(6*\k+4+\i,-\j)$) rectangle ($(6*\k+4+\i+1,-\j-1)$);
%    \pgfmathparse{10*\k+4*\j+\i}
%    \draw  ($(5*\k+4+\i+0.5,-\j-0.5)$) node {\pgfmathprintnumber[]{\pgfmathresult}};
  }
 }
}

\draw[>=implies, ->,double equal sign distance, very  thick] (9,-3.2) -- (9,-4.8);

\foreach \k in {0,1} {
\foreach \j/\c in {0/red,1/blue}{
  \foreach \i in {0,1,2,3} {
    \filldraw[color=black, fill=\c!20, thick] ($(\k*10+\i+4*\j-1,-5)$) rectangle ($(\k*10+\i+4*\j+1-1,-6)$);
    \pgfmathparse{\k*10+4*\j+\i}
    \draw  ($(\k*10+\i+4*\j+0.5-1,-5.5)$) node {\pgfmathprintnumber[]{\pgfmathresult}};
  }
 }
}

\foreach \k in {0,1} {
\foreach \j/\c in {2/yellow}{
  \foreach \i in {0,1} {
    \filldraw[color=black, fill=\c!20, thick] ($(\k*10+\i+4*\j-1,-5)$) rectangle ($(\k*10+\i+4*\j+1-1,-6)$);
    \pgfmathparse{\k*10+4*\j+\i}
    \draw  ($(\k*10+\i+4*\j+0.5-1,-5.5)$) node {\pgfmathprintnumber[]{\pgfmathresult}};
  }
 }
}

%\draw[->, ultra  thick] (\k*10+6,-3.2) -- (\k*6+6,-4.8);

\foreach \k in {0,1} {
	\draw[->, very thick] (\k*6+5,0.5) -- (\k*6+7,0.5);
	\draw (\k*6+6,1) node {$x_{1}$};
	\draw[->, very thick] (\k*6+3.5,-0.5) -- (\k*6+3.5,-2.5);
	\draw (\k*6+3,-1.5) node {$x_{2}$};
}

\draw[->, very thick] (6,1.5) -- (12,1.5);
\draw (9,2) node {$x_3$};

\draw[->, very thick] (8,-6.5) -- (10,-6.5);
\draw (9,-7) node {$x$};

\end{tikzpicture}
\caption{\label{fig:quant-pat3} An example of storing two matrices together in one array.}
\end{figure}

\begin{example}
Consider Figure~\ref{fig:quant-pat3}, with two matrices stored into a single, one-dimensional array.
A quantifier suitable for the original, non-flattened matrices is the following.
% (lstinputlisting) Chapters/6-Vercors/quantifiers1.pvl
\begin{lstlisting}[style=pvlcomments, firstline=9,lastline=10, numbers=none, aboveskip=1ex, belowskip=1ex]
\forall $\SP$int x_1, int x_2, int x_3; 0<=x_1<4 && 0<=x_2<3 && 0<=x_3<2 && 4*x_2+x_1<10;
$\UP$A[10*x_3+4*x_2+x_1]>x_2;
\end{lstlisting}
Note the constraint \pvlinlineb|4*x_2+x_1<10|, which ensures that the grey cells are not accessed. It serves as the
$\pvm{C}$-constraint in the definition of \pvlinlineb|X(x_1,...,x_k)|. Note also
that every time \pvlinlineb|x_2| is incremented, we jump four elements ahead in the flattened array, corresponding
to the upper-bound of \pvlinlineb|x_1|, and when \pvlinlineb|x_3| is incremented, we jump ahead ten elements,
due to the constraint \pvlinlineb|4*x_2+x_1<10|.
To rewrite this quantifier, first note that $\pvm{a}_1 = 1$, $\pvm{a}_2 = 4$, and $\pvm{a}_3 = 10$.
Furthermore, we have $\off = 10 \cdot 0 + 4 \cdot 0 + 0 + 0 = 0$, $\base_3(\pvm{x}) = \pvm{x} - 0 = \pvm{x}$, $\base_2(\pvm{x}) = \base_3(\pvm{x})\bmod 10 = \pvm{x} \bmod 10$, and $\base_1(\pvm{x}) = \base_2(\pvm{x}) \bmod 4 = (\pvm{x} \bmod 10) \bmod 4$.
%By the definition of $f^{-1}$, this gives us $\pvm{x}_1 = \base_1(\pvm{x}) / \pvm{a}_1 = ((\pvm{x} \bmod 10) \bmod 4) / 1$, $\pvm{x}_2 = \base_2(\pvm{x}) / \pvm{a}_2 = (\pvm{x} \bmod 10)/4$, and $\pvm{x}_3 = \base_3(\pvm{x}) / \pvm{a}_3 = \pvm{x}/10$.
This means that $\pvm{Y(x)}$ is defined as $4\cdot((\pvm{x} \bmod 10)/4)+((\pvm{x} \bmod10) \bmod 4)/1<10 \wedge
((\pvm{x} \bmod 10) \bmod 4) \bmod 1 = 0 \wedge 0 \leq \pvm{x} < 10\cdot 2$.
Note that $((\pvm{x} \bmod 10) \bmod 4) \mod 1 = 0$ is always true, and
so is $4\cdot((\pvm{x} \bmod 10)/4)+((\pvm{x} \bmod10) \bmod 4)/1<10$ (remember that `/' refers to integer division
that discards any remainder). This leads to the following new quantifier.
%
%After rewriting the quantifier using these equivalences,
%we get the following.
%\lstinputlisting[style=pvlcomments, firstline=11,lastline=12, numbers=none, aboveskip=1ex, belowskip=1ex]{Chapters/6-Vercors/quantifiers1.pvl}
%This can be further simplified. Note that \pvlinlineb|0<=(((x%10)%4)/1<4| is always true, and so is \pvlinlineb|0<=((x%10)/4)<3|.
%Finally, \pvlinlineb|4*((x%10)/4)+(((x%10)%4)/1)<10| is always true as well (remember that `/` refers to Euclidian division).
%This gives us the following simplified quantifier.
% (lstinputlisting) Chapters/6-Vercors/quantifiers1.pvl
\begin{lstlisting}[style=pvlcomments, firstline=13,lastline=13, numbers=none, aboveskip=1ex, belowskip=1ex]
\forall int x; 0<=x<2*10; A[x]>((x%10)/4);
\end{lstlisting}
\end{example}

Sometimes, the rewriting procedure may result in nonlinear subexpressions.
For example, consider the following quantifier, with positive integers \pvlinlineb|n_1|, \pvlinlineb|n_2|:
\pvlinlineb|\forall int x_1, int x_2; 0<=x_1<n_1 && 0<=x_2<n_2 && x_1%2==0; A[x_1+n_1*x_2]>0|.
This is rewritten to \pvlinlineb|\forall int x; 0<=x<n_1*n_2&&(x%n_1)%2==0; A[x]>0|.
This quantifier contains subexpressions \pvlinlineb|x%n_1| and \pvlinlineb|n_1*n_2|, which are both nonlinear.
It depends on the capabilities of the underlying SMT solver whether this poses a problem.
One could alternatively add additional lemmas to resolve this.

% In order to reason with this, we normally add relevant nonlinear information as verification lemmas. \todo{which lemmas?}
%Typically, you can still work with this, if you add the needed nonlinear information as verification lemmas.

Other verifiers, such as Dafny~\cite{leinoDafnyAutomaticProgram2010} and Viper, often introduce additional structure as a way to circumvent the restriction that arithmetic is not allowed in triggers. With this approach, one could define, for instance, \pvlinlineb|acc(int x, int b, int a) = x*a + b| allowing you to write \pvlinlineb|A[acc(x, b, a)]| as a trigger. One issue is that you must modify the code to be verified to \textit{always} use this indexing. Furthermore, some programs will access arrays in various ways. For example, performing a reduction in a GPU kernel or applying loop-tiling will lead to different access patterns. In those cases, your quantifier will not instantiate anymore. For successful verification, one needs to add complicated verification lemmas relating the different access functions to properly trigger the quantifiers again.

However, the quantifier rewrite method of this section can be \textit{combined} with this additional structure, which is what we have done for the GPU experiments of Section~\ref{sec:experiments}. We use the function \pvlinlineb|int acc1d(int x, int b, int n, int a) =x*a + b| to add additional structure. Therefore, an example trigger used in the complete version of Listing~\ref{lst:swap-cl} has the following structure: \pvlinlineb|xgm[acc1d(\gtid + i*get_global_size(0), x_offset, n, x_inc)]|. The benefit is twofold: (1) we can easily access the array in other ways since less structure is present, and (2) we can add additional nonlinear information to the contract of the \pvlinlineb|acc1d|. This additional information helps in dealing with potential incompleteness caused by nonlinearity.

\paragraph{Implementation in \vercors.}

The rewriting procedure described in this section has been implemented in \vercors. 
%\todo{overall flow picture of the different steps?}
%The user should specify, using trigger syntax, which array index should be attempted to be rewritten. Then, our implementation in VerCors consists of multiple steps:
%\begin{enumerate}
%\item We check whether an index expression fits the linear pattern.
%\item We check whether the conditions are met.
%\item We apply the rewrite.
%\end{enumerate}
%
%The first step is checked by determining whether the array index is a linear combination of the quantified variables. If this is the case, we will check each \textit{permutation} of variables, as we do not know the correct order of the quantified variables a priori. For each $x_i$, we collect all the upper and lower bounds in the domain. Each variable must have a lower bound, and \(x_k\) must have an upper bound.
%
%% \todo{How do we check 1?}
%Next,
To check whether the conditions are met for a quantifier marked as a trigger by the user, \vercors needs to take into account the context of surrounding annotations and statements. Often, equations \ref{eq:pat3-1}--\ref{eq:pat3-3} can
be derived with basic reasoning. Equation~\ref{eq:pat3-4}, however, tends to be harder to determine.
For this reason, we identified Lemma~\ref{lemma:2patterns}, found in  the Appendix, that simplifies checking this condition. We added a straightforward symbolic evaluator to \vercors to verify whether all conditions are satisfied.%\todo{More details about the solver?}.

Finally, even though the rewriting procedure is described for an index expression consisting of a single linear pattern
\pvlinlineb|a_k*x_k + ... + a_1*a_1 + b|,
our implementation contains a generalisation of this, supporting the detection of multiple linear patterns in an expression. By recursively replacing such patterns with single variables using the rewriting procedure, the complete expression is simplified such that it can be used as a trigger.
%!TEX root = main.tex

\section{Unique and Immutable Types for Arrays}\label{sec:types}
Data-level parallel programs tend to work with data stored in arrays, often more than one.
Unfortunately,
as this number of arrays increases, the verification time tends to grow quadratically~\cite{vandenhaakVerifyingRadioTelescope2024},
making programs often
unverifiable. A major cause for this is that \emph{permission quantifiers}, as used in separation logic~\cite{mullerAutomaticVerificationIterated2016}, require numerous internal checks for completeness,
and these quantifiers are essential for verifying parallel programs.
For instance, Listing~\ref{lst:swap-cl} uses permission quantifiers at l.\ref{ln:swap-write-1}--\ref{ln:swap-write-2}, to indicate that a thread executing the
kernel is allowed to write to particular array cells.
%based on the sound method to encode quantified permissions in separation logic by M\"uller et al.~\cite{mullerAutomaticVerificationIterated2016}. Each permission quantifier requires numerous checks internally for completeness, 
The internal checks are needed, since permissions could be combined in case the arrays overlap in memory, and
 program correctness may depend on this. Over time, the heuristics used in deductive verifiers to reason about
 arrays possibly overlapping, and the effects of that happening, have improved, but for complex programs, this problem
 has not been solved.
% As a result, in this case, these checks make the verifier\footnote{For this simple example, some heuristics were applied in newer Silicon versions to make it verifiable. However, the underlying issue still stands for more complex programs: \url{github.com/viperproject/silicon/issues/831}.} run out of time.

However, we observe that in the case of data-level parallel programs, often (1)~the contents of some arrays remain constant throughout the program, and (2)~different arrays typically do not overlap in memory. 
To exploit this, we introduce the \textit{type qualifiers} \pvlinlineb|immutable| and \pvlinlineb|unique| to reduce the number of required verification checks.
Before discussing the type qualifiers, we address why verification times tend to grow rapidly as the number of arrays
increases, at least with the symbolic execution technique for quantified permissions used by \viper~\cite{mullerAutomaticVerificationIterated2016}.
%introduced by M\"uller et al.~\cite{mullerAutomaticVerificationIterated2016}, to further understand why verification times can blow-up when many arrays are present.

A permission quantifier is, internally in \viper, modelled as a \textit{quantified heap chunk}, i.e., a function that takes a memory reference $\pvm{r}.\pvm{f}$, with $\pvm{r}$ a quantified (reference-typed) receiver and $\pvm{f}$ a field,
 and returns a tuple $(v(\pvm{r}),p(\pvm{r}))$, consisting of a symbolic value $v(\pvm{r})$
%, also called a \textit{field value function}, 
and a (symbolic) permission $p(\pvm{r}) \in \langle0,1]$. For example, before processing quantifier \pvlinlineb|\forall int i; 0<=i<9 ==> Perm(xs[i], 1\2)|, a new chunk is produced mapping each of the first nine elements of \pvlinline|xs|
to a symbolic field value and a \pvlinline|1\2| permission.
Subsequent quantifiers for the same array(s) produce additional chunks, as opposed to updated existing ones.
Whenever an assertion is processed, a \emph{snapshot} is created, which contains the chunks that are relevant for
the assertion, i.e., that contain the symbolic values referred to in the assertion~\cite{schwerhoffAdvancingAutomatedPermissionBased2016}. To avoid inconsistencies, \viper needs to periodically compare snapshots to check if they are the same. This is because a function call to a side-effect-free function is considered equal if all the arguments are the same, including heap-dependent arguments. This involves
comparing all the chunks of one snapshot with all the chunks of the other snapshot, causing the quadratic
execution time.

\paragraph*{Immutable qualifier.}
With the \pvlinlineb|immutable| qualifier for (a pointer to) an array, for instance in \pvlinlineb|immutable int* xs|,
we propose to indicate that the pointer refers to data that does not change throughout execution of the program.
For such pointers, in \vercors, the data can be encoded using an immutable \emph{sequence}, as opposed to a
mutable pointer block. Elements of a sequence are not stored on the heap.

% in C  to a variable's type, indicates that the variable cannot be modified during its lifetime after initial assignment (and thus is read-only). You can also apply the \pvlinline|const| modifier to a pointer, such as \pvlinline|const int* xs|, meaning you cannot assign to \pvlinline|xs[i]| for any index i. 

%\vercors encodes normal pointers as a \textit{pointer block} and an \textit{offset}. For instance, declaring a pointer like \pvlinline|int* xs = (int*) malloc(sizeof(int)*n)| creates a pointer with an offset \pvlinline|0| and an underlying mutable pointer block of size \pvlinline|n|. However, 
%an \pvlinline|immutable| pointer allows us to model the underlying pointer block as an immutable \sequence. A sequence is a \vercors built-in type that resembles an array, but its elements are immutable and not stored on the heap. Listing~\ref{lst:const-pointer} in the Appendix shows the Axiomatic Data Type (ADT) encoding of the immutable pointer.

With this new qualifier, functions can be defined with immutable pointer parameters.
However, when called, the given pointer argument may not be immutable. To allow this,
the mutable pointer needs to be coerced to an immutable one. Once this has happened,
the pointer cannot be changed anymore. This is achieved by requiring that the mutable pointer
releases a positive amount of permission for each of its elements when coerced.

\paragraph*{Unique type qualifier.}
We propose the \pvlinlineb|unique#-i-#| type qualifier to distinguish non-overlapping data, allowing for
a more efficient encoding in \viper. Here, \pvlinlineb|i| is a \textit{uniqueness number}: only
pointers of the same type \emph{and} that have the same uniqueness number can potentially alias.
Unique type qualifiers are applicable for all heap locations, such as arrays, pointers, members of structs,
and fields of classes.
In Listing~\ref{lst:unique}, we show several examples of how this qualifier can be used,
including some restrictions.
%Correctness of our approach is guaranteed by the type checker.
For instance, \pvlinline|xs1|  (l.\ref{lst:unique-xs1}) and \pvlinline|xs3| (l.\ref{lst:unique-xs3}) have the same type, but different uniqueness numbers, making the assignment at l.\ref{lst:unique-assignx1x3} illegal.
%If line~\ref{lst:unique-a} were present without line~\ref{lst:querry-a}, the unique type modifier would have no effect. %However, since
At l.\ref{lst:querry-a}, the address of \pvlinline|a| is queried, resulting in the variable being stored on the heap as an integer with uniqueness number 1 (l.\ref{lst:unique-a}).

%\lstinputlisting[style=pvl2, linerange={3-14,17-21,28-33},float=t,caption={\label{lst:unique}Examples of unique type modifier used on pointers and structs.}]{Chapters/6-Vercors/unique-types.c}
% (lstinputlisting) Chapters/6-Vercors/unique-types1.c
\begin{lstlisting}[style=pvl2, float=t,caption={\label{lst:unique}Examples of unique type modifier used on pointers and structs.}]
/*@unique#-1-#@*/ int*  xs1; // The pointer points to integers with uniqueness number 1 |\label{lst:unique-xs1}|
int /*@unique#-1-#@*/*  xs2; // The same type as above |\label{lst:unique-xs2}|
xs1 = xs2; // Allowed: both have the same uniqueness number for their integers |\label{lst:unique-assignx1x2}|
int /*@unique#-2-#@*/*  xs3; // This involves integers with uniqueness number 2 |\label{lst:unique-xs3}|
//~xs1 = xs3; // Not allowed: the pointers have different uniqueness numbers |\label{lst:unique-assignx1x3}|
/*@unique#-1-#@*/ int a; // Uniqueness not relevant, as integers are not stored on heap|\label{lst:unique-a}|
xs1 = &a; // Address of a is tracked as heap variable, hence uniqueness number matters |\label{lst:querry-a}|
int* /*@unique#-1-#@*/ * zs; // The zs pointer has uniqueness number 1 |\label{lst:unique-zs}|
xs1 = *zs; // Not allowed: *zs values do not have a uniqueness number |\label{lst:unique-assignx1z}|
struct v { int n; int* xs; /*@unique#-2-#@*/ int* ys; }; // Uniqueness used in a struct |\label{lst:unique-struct}|
struct v a; // A struct stored in variable a with uniqueness numbers as defined for v |\label{lst:unique-struct-start}|
/*@unique_field#-n,1-#@*/  struct v b; // Member n has uniqueness number 1
/*@unique_pointer_field#-xs,1-#@*/  struct v c; // Type of xs member is unique<1> int *
//~a.xs = xs1; // Not allowed: a.xs* values do not have a uniqueness number
c.xs = xs1; // Allowed: the uniqueness numbers match
//~xs1 = &(b.n); // Allowed: the uniqueness numbers match |\label{lst:unique-struct-end}|
void sort(int* xs, int len){...} // A previously defined sorting function |\label{lst:sort}|
int f(/*@unique#-1-#@*/int* xs, int n){
  sort(xs, n);//~|\label{lst:modulo-uniqueness}|
}
\end{lstlisting}
%\lstinputlisting[style=pvl, linerange={36-39,42-47},float=t,caption={Adding unique types to existing struct members.\label{lst:unique-struct}}]{Chapters/6-Vercors/unique-types.c}

%, and variables with addresses queried in \C.
%Regarding type constraints, in addition to the usual constraints, we require them to share the same uniqueness type number.

Like \pvlinlineb|const| in \C, the unique type qualifier associates to the left, and pointer markers (\pvlinlineb|*|) denote uniqueness at different levels.
For example,
%the unique type of \pvlinline|ys| on line~\ref{lst:unique-ys} has no effect, since \pvlinline|ys| is not stored on the heap. However,
at l.\ref{lst:unique-zs}, the content of \pvlinlineb|zs|, of integer pointer type, has uniqueness number 1. Yet,
the inner pointer of \pvlinlineb|zs| has no uniqueness number, so the assignment at l.\ref{lst:unique-assignx1z} is not allowed.

Unique types can also be applied to struct members, see l.\ref{lst:unique-struct}. Sometimes,
it may be known that
different \pvlinlineb|struct| instances do not overlap their internal data. For this purpose, we propose the
type qualifier annotations \pvlinlineb|unique_field| and \pvlinlineb|unique_pointer_field|.
%These qualifiers
%can be used to indicate that (pointer) members of structs do not overlap, even if contained within the same struct.
These features are demonstrated at l.\ref{lst:unique-struct-start}--\ref{lst:unique-struct-end}.

%\subsubsection{Encoding}
Heap locations in \viper are encoded using \textit{fields}, with all memory locations of the same type using the same field. When using the unique type qualifiers, types with different uniqueness numbers use different fields.
%For instance, an array of type \pvlinlineb|unique#-1-# bool* xs| stores its values in the field \pvlinlineb|bool1: Bool|.
When comparing snapshots, as mentioned earlier, only chunks that refer to the same field are compared.
Therefore, by separating non-overlapping arrays into different fields, significant verification time improvements can
be achieved. For more on the practical impact of this, see Section~\ref{sec:experiments}.

% allowing access to its value via \pvlinline|loc(xs, i).bool1| for the array \pvlinline|xs| and index \pvlinline|i|.

%\subsubsection{Correctness of encoding}
%Regarding correctness of our approach, ensure correct encoding, the program must never place a heap location of type \pvlinline|unique<A> t| in the heap location of \pvlinline|unique<B> t| if \pvlinline|A!=B|. Fortunately, type checking resolves this issue. A heap location changes only via assignment or function calls, and in both cases, if the type checker does not allow coercion between \pvlinline|unique<A> t| and \pvlinline|unique<B> t| when \pvlinline|A!=B|, a value cannot be placed in the incorrect heap location. Thus, the type checker disallows line \ref{lst:unique-assignx1x3} of Listing~\ref{lst:unique}.
%However, we permit coercion between (unique) types if the type is not stored on the heap. Thus, \pvlinline|unique<1> int| can be coerced to \pvlinline|int|; otherwise, operations like \pvlinline|a[0] = 1| for an array \pvlinline|unique<1> int* a| would be impossible.

%\subsubsection{Correct coercions for function calls}
Finally, previously defined functions, as, for instance, provided
in libraries, should be usable in combination with unique type qualifiers.
%For this reason,
%we also 
%allow some type coercions between unique types in function calls, otherwise, unique types work poorly with previously defined functions, such as library functions. For example, consider Listing~\ref{lst:sort-coercion}.
%\lstinputlisting[style=pvl, float=t, linerange={50-54,57-62},caption={Calling a sort function for pointers of different uniqueness.\label{lst:sort-coercion}}]{Chapters/6-Vercors/unique-types.c}
For instance, consider l.\ref{lst:modulo-uniqueness}, where the integer array \pvlinlineb|xs| with uniqueness number 1
is given as parameter to a predefined sorting function (l.\ref{lst:sort}), in which uniqueness numbers are not given.
This will not type check, as \pvlinline|sort| accepts only non-unique arrays.
%However, the unique modifier indicates only that distinct arrays do not overlap; with a single array as a parameter, there is nothing to overlap with. Moreover, there is no fundamental difference among \pvlinline|unique<1> int*|, \pvlinline|unique<2> int*|, and \pvlinline|int*|. Therefore, 
However, such function calls should be supported regardless of the exact uniqueness number, as long as uniqueness between arrays is respected
%, provided uniqueness between arrays is respected. We achieve this with type coercions in \vercors during function calls and check if the type coercions of a function call are \textit{consistent}. If they are consistent, we automatically make a function copy with the correct type signature.
We consider a function call \emph{consistent} if the type signature of a call partitions the parameters in the same
way as the called function. For example, a function \pvlinlineb|f(unique#-0-# int* x, unique#-0-# int* y)| can be called
with both arguments of type \pvlinlineb|unique<1> int*|, but not if one has type \pvlinlineb|unique<0> int*| and the
other has type \pvlinlineb|unique<1> int*|.

\iffalse
\begin{table}[!t]
\centering
\input{results/blas_level1.tex}
\input{results/blas_level2.tex}
\caption{\label{tab:blas-results}Verification results for the CLBlast GPU kernels for level 1 and 2, with reported backend verification time (T) in seconds.
% Level 1 in BLAS covers vector operations only, while level 2 covers matrix–vector operations. 
\textbf{Base} is the baseline configuration without any extra features. It is compared against versions with unique type qualifiers (\textbf{Unique}), immutable arrays (\textbf{Immutable}), extracted kernel bodies (\textbf{Extract}), and a configuration enabling all features simultaneously (\textbf{All}).
The $\dagger$ indicates we (try to) prove functional correctness in addition to memory safety.
% A name marked with a $\dagger$ indicates that \textsc{VerCors} attempts to prove both memory safety and functional correctness of the kernel; otherwise, it only attempts to prove memory safety.
% $T_t$ denotes the average total verification time of VerCors, and $T_v$ the average time spent in the VerCors backend. Speedup$_v$ is the speedup of backend time relative to the baseline.
$\#_{imm}$ and $\#_p$ is the number of immutable and normal pointer arrays, respectively. $\#_A$ the is the total number of annotations.}
\end{table}
\else
\begin{figure}[t!]
\centering
\subfloat[\label{fig:blas-barplot-level1}Level 1]{\includegraphics[width=0.95\textwidth]{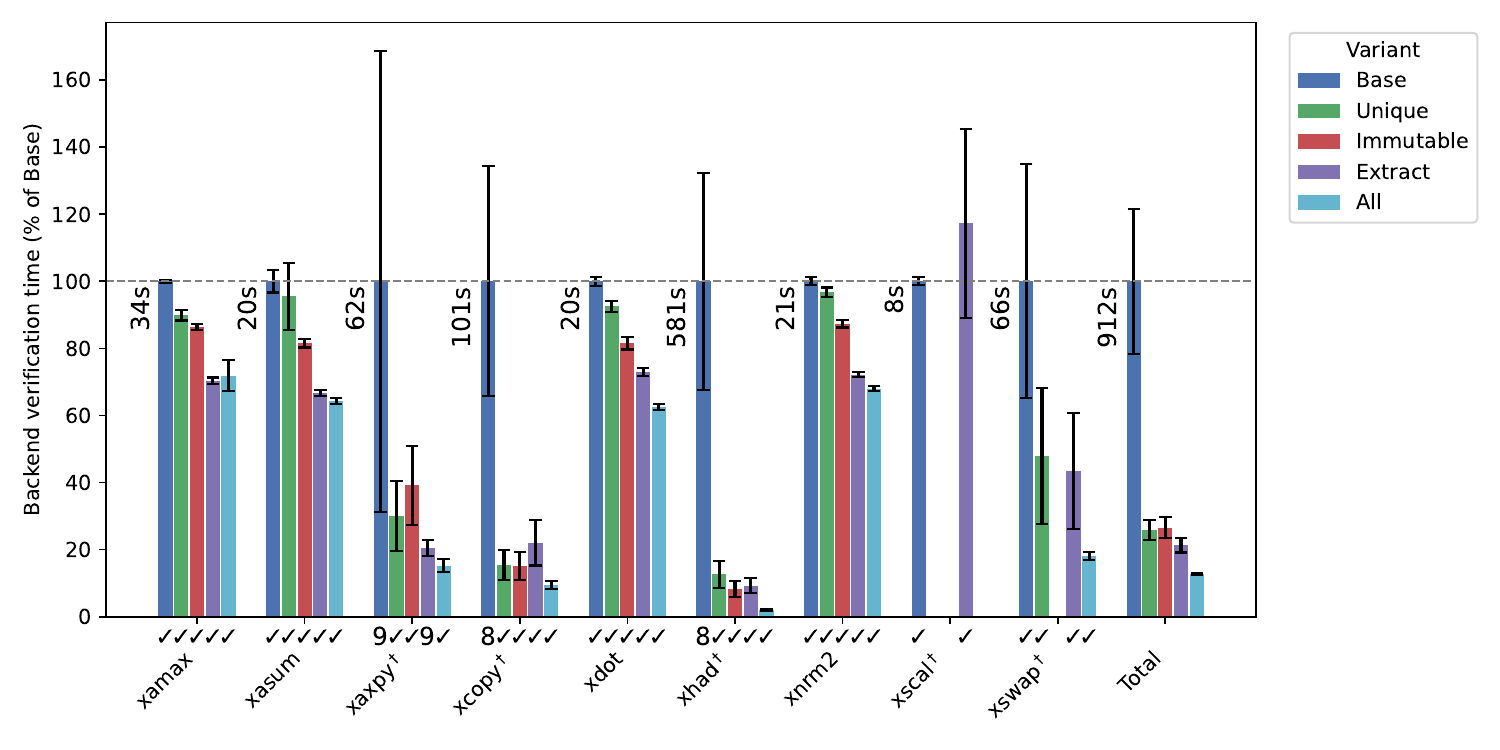}}\\
\subfloat[\label{fig:blas-barplot-level2}Level 2]{\includegraphics[width=0.95\textwidth]{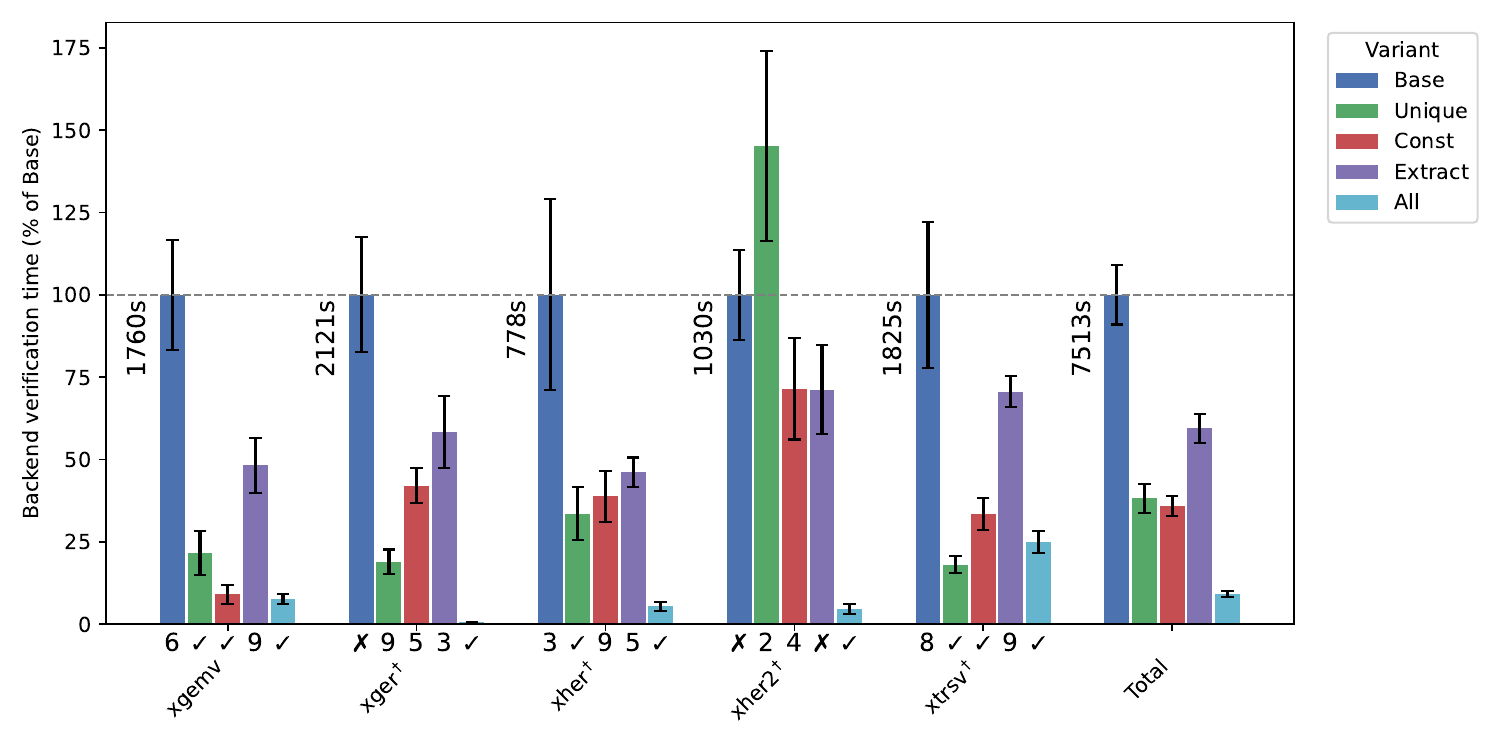}}
\caption{\label{fig:blas-results}Verification of CLBlast kernels, normalized so Base=100\%. 
%\textbf{Base} is the baseline configuration without any extra features.
\textbf{Base} is compared against versions with unique type qualifiers (\textbf{Unique}), immutable arrays (\textbf{Immutable}), extracted kernel bodies (\textbf{Extract}), and \textbf{All} which enables all these features. %Bars show mean with standard error of the mean.
Below the bars: number of successful runs (\checkmark = 10/10, \ding{55} 0/10).
%\checkmark = 10/10 successful, \ding{55}  = 0/10 successful, else number of successful runs out of 10.
Next to base bars: average time of the 10 base runs in seconds. The $\dagger$ indicates we (try to) prove functional correctness in addition to memory safety.}
\end{figure}
\fi

\section{Experiments}\label{sec:experiments}
% \subsection{Experiments}\todo{replace by new experiments section?}
In this section, we evaluate
%\footnote{Found in the accompanying artefact.}
the effectiveness of the techniques proposed in Sections~\ref{sec:quantifiers} and \ref{sec:types} to verify
data-level parallel programs. Regarding the rewrite procedure of Section~\ref{sec:quantifiers}, the conclusion is clear:
none of the experiments we report on in this section were verifiable without the use of this procedure. Using the type qualifiers \pvlinlineb|unique| and \pvlinlineb|immutable|
%, along with their encodings in \viper as described in Section~\ref{sec:types},
further improves the verification of data-level parallel programs that contain multiple arrays.
%Additionally, the
%quantifier rewrite method of Section~\ref{sec:quantifiers} is also evaluated, as all the experiments shown here rely on it. 

We chose three representative sets of experiments that heavily rely on quantifiers and flattened multi-dimensional arrays. The first set contains the GPU kernels from the CLBlast library~\cite{nugterenCLBlastTunedOpenCL2018}, consisting of OpenCL kernels that implement the Basic Linear Algebra Subprograms (BLAS). These experiments are conducted to 
evaluate whether our proposed techniques are applicable to typical GPU kernels.
The next two sets of experiments were chosen 
from~\cite{vandenhaakHaliVerDeductiveVerification2024,vandenhaakVerifyingRadioTelescope2024},
among which is an implementation of the Padre algorithm, as part of the software for a Radio Telescope Pipeline~\cite{vandenhaakVerifyingRadioTelescope2024}.
In these papers, the authors ran into limitations of the underlying verifiers due to the many arrays and quantifiers present. These works contain optimised parallel CPU programs generated from the Domain Specific Language Halide~\cite{ragan-kelleyHalideDecouplingAlgorithms2017}. The HaliVer~\cite{vandenhaakHaliVerDeductiveVerification2024} tool adds verification annotations to these programs such that they can be verified by \vercors.

\paragraph{Set-up.} We used a machine with an 11th Gen Intel Core i7-11800H @ 2.30GHz and 32GB of RAM running Ubuntu 23.04. We ran each experiment ten times and reported the average of these. The \vercors version used is included in the accompanying artefact. We ran \vercors with the options \verb|--silicon-quiet| \verb|--dev-time-backend| \verb|--dev-total-timeout=3600| \verb|--dev-assert-timeout 60| \verb|--target x86_64-linux-gnu|, and used the Silicon verifier of \viper, which is based on symbolic execution. Additionally, the experiments from~\cite{vandenhaakHaliVerDeductiveVerification2024,vandenhaakVerifyingRadioTelescope2024} were also run with \verb|--no-infer-heap-context-into-frame| as these options were also used in the original experiments.

\begin{table}[!t]
\caption{Reverification of the experiments presented in~\cite{vandenhaakHaliVerDeductiveVerification2024}.
% The left table considers functional correctness, whilst the right table only considers memory safety.
% We compare using \pvlinline|unique| and \pvlinline|const| type qualifiers (\textbf{Qualifiers}) with not using them (\textbf{Base}).
% The `Base` experiments show different times compared to the originals in~\cite{vandenhaakHaliVerDeductiveVerification2024} since we reevaluated them using the same version as for the `Qualifiers` experiments for a fair comparison.
\label{tab:haliver}
}
\centering
\label{tab:chp6-haliver-mem}\small
 \centering
\begin{subtable}[t]{0.475\textwidth}
{
\resizebox{1\textwidth}{!}{  
\begin{tabular}[t]{lll|rrr|rrr|r}
\hline
 & & & \multicolumn{3}{c|}{\textbf{Base}} & \multicolumn{3}{c|}{\textbf{Qualifiers}} & \\
\textbf{Name} & \textbf{V} & \textbf{Result} & \textbf{\#} & \textbf{T} & $\mathbf{\sigma}$ & \textbf{\#} & \textbf{T} & $\mathbf{\sigma}$ & \textbf{Speedup} \\
\hline
blur\ & 0& \checkmark&  10 &  13 & 1&  10&  9 & 1& \cellcolor{ForestGreen!25} 1.4 \\
\hline
 & 1& \checkmark&  10 &  14 & 1&  10&  10 & 1& \cellcolor{ForestGreen!25} 1.4 \\
\hline
 & 2& \checkmark&  10 &  18 & 1&  10&  13 & 1& \cellcolor{ForestGreen!25} 1.4 \\
\hline
 & 3& \checkmark&  10 &  20 & 1&  10&  20 & 7&  1.0 \\
\hline
hist\ & 0& \checkmark&  10 &  25 & 1&  10&  26 & 7&  1.0 \\
\hline
 & 1& \checkmark&  10 &  30 & 1&  10&  17 & 1& \cellcolor{ForestGreen!25} 1.7 \\
\hline
 & 2& \checkmark&  10 &  42 & 1&  10&  23 & 1& \cellcolor{ForestGreen!25} 1.8 \\
\hline
 & 3& \checkmark&  10 &  79 & 3&  10&  32 & 2& \cellcolor{ForestGreen!25} 2.5 \\
\hline
conv\_layer\ & 0& \checkmark&  10 &  153 & 1&  10&  53 & 1& \cellcolor{ForestGreen!25} 2.9 \\
\hline
 & 1& \checkmark&  10 &  171 & 1&  10&  72 & 6& \cellcolor{ForestGreen!25} 2.4 \\
\hline
 & 2& \checkmark&  10 &  246 & 3&  10&  71 & 1& \cellcolor{ForestGreen!25} 3.5 \\
\hline
 & 3& \checkmark&  10 &  227 & 2&  10&  68 & 1& \cellcolor{ForestGreen!25} 3.3 \\
\hline
gemm\ & 0& \checkmark&  10 &  61 & 1&  10&  22 & 1& \cellcolor{ForestGreen!25} 2.8 \\
\hline
 & 1& \checkmark&  10 &  133 & 1&  10&  45 & 4& \cellcolor{ForestGreen!25} 2.9 \\
\hline
 & 2& \checkmark&  10 &  250 & 11&  10&  131 & 18& \cellcolor{ForestGreen!25} 1.9 \\
\hline
 & 3& Error& 10 &  & & 10 &  &  \\
\hline
auto\_viz\ & 0& \checkmark&  10 &  19 & 1&  9&  21 & 2& \cellcolor{BrickRed!25} 0.9 \\
& & \ding{55}& 0 &  & & 1 & 15 & - \\
\hline
 & 1& \checkmark&  8 &  63 & 1&  10&  52 & 1& \cellcolor{ForestGreen!25} 2.1 \\
& & \ding{55}& 2 & 290 & 1& 0 &  &  \\
\hline
 & 2& \checkmark&  10 &  67 & 1&  10&  55 & 1& \cellcolor{ForestGreen!25} 1.2 \\
\hline
 & 3& \checkmark&  10 &  49 & 1&  10&  36 & 1& \cellcolor{ForestGreen!25} 1.4 \\
\hline
\hline
Total & & & & 1722 & 33  & & 774 & 22  & \cellcolor{ForestGreen!25} 2.2  \\
\hline
\end{tabular}
}
}
\caption{Functional correctness \& Memory safety}
\end{subtable}
\begin{subtable}[t]{0.51\textwidth}
{
\resizebox{1\textwidth}{!}{  
\begin{tabular}[t]{lll|rrr|rrr|r}
\hline
 & & & \multicolumn{3}{c|}{\textbf{Base}} & \multicolumn{3}{c|}{\textbf{Qualifiers}} & \\
\textbf{Name} & \textbf{V} & \textbf{Result} & \textbf{\#} & \textbf{T} & $\mathbf{\sigma}$ & \textbf{\#} & \textbf{T} & $\mathbf{\sigma}$ & \textbf{Speedup} \\
\hline
blur\ & 0& \checkmark&  10 &  13 & 1&  10&  10 & 1& \cellcolor{ForestGreen!25} 1.3 \\
\hline
 & 1& \checkmark&  10 &  13 & 1&  10&  10 & 1& \cellcolor{ForestGreen!25} 1.3 \\
\hline
 & 2& \checkmark&  10 &  17 & 1&  10&  14 & 3& \cellcolor{ForestGreen!25} 1.2 \\
\hline
 & 3& \checkmark&  10 &  14 & 1&  10&  11 & 1& \cellcolor{ForestGreen!25} 1.4 \\
\hline
hist\ & 0& \checkmark&  10 &  22 & 1&  10&  15 & 1& \cellcolor{ForestGreen!25} 1.5 \\
\hline
 & 1& \checkmark&  10 &  27 & 1&  10&  20 & 2& \cellcolor{ForestGreen!25} 1.4 \\
\hline
 & 2& \checkmark&  10 &  36 & 1&  10&  32 & 10& \cellcolor{ForestGreen!25} 1.1 \\
\hline
 & 3& \checkmark&  10 &  46 & 1&  10&  29 & 2& \cellcolor{ForestGreen!25} 1.6 \\
\hline
conv\_layer\ & 0& \checkmark&  10 &  144 & 2&  10&  60 & 6& \cellcolor{ForestGreen!25} 2.4 \\
\hline
 & 1& \checkmark&  10 &  162 & 1&  10&  65 & 1& \cellcolor{ForestGreen!25} 2.5 \\
\hline
 & 2& \checkmark&  10 &  220 & 3&  10&  74 & 5& \cellcolor{ForestGreen!25} 3.0 \\
\hline
 & 3& \checkmark&  10 &  201 & 2&  10&  72 & 5& \cellcolor{ForestGreen!25} 2.8 \\
\hline
gemm\ & 0& \checkmark&  10 &  56 & 1&  10&  21 & 1& \cellcolor{ForestGreen!25} 2.6 \\
\hline
 & 1& \checkmark&  10 &  110 & 1&  10&  35 & 1& \cellcolor{ForestGreen!25} 3.2 \\
\hline
 & 2& \checkmark&  10 &  152 & 8&  10&  53 & 6& \cellcolor{ForestGreen!25} 2.9 \\
\hline
 & 3& Error& 10 &  & & 10 &  &  \\
\hline
auto\_viz\ & 0& \checkmark&  10 &  18 & 1&  10&  21 & 4& \cellcolor{BrickRed!25} 0.9 \\
\hline
 & 1& \checkmark&  9 &  34 & 1&  10&  45 & 10& \cellcolor{ForestGreen!25} 1.3 \\
& & \ding{55}& 1 & 266 & -& 0 &  &  \\
\hline
 & 2& \checkmark&  10 &  36 & 1&  10&  39 & 10& \cellcolor{BrickRed!25} 0.9 \\
\hline
 & 3& \checkmark&  10 &  34 & 1&  10&  28 & 1& \cellcolor{ForestGreen!25} 1.2 \\
\hline
\multicolumn{2}{l}{bilateral\_grid}& \checkmark&  10 &  48 & 1&  10&  39 & 1& \cellcolor{ForestGreen!25} 1.2 \\
\hline
\multicolumn{2}{l}{camera\_pipe}& \checkmark&  10 &  205 & 7&  10&  122 & 14& \cellcolor{ForestGreen!25} 1.7 \\
\hline
\multicolumn{2}{l}{depthwise\_}& \checkmark&  10 &  201 & 2&  10&  143 & 1& \cellcolor{ForestGreen!25} 1.4 \\
separable\_conv & & & & & & & & \\
\hline
\hline
Total & & & & 1831 & 26  & & 956 & 26  & \cellcolor{ForestGreen!25} 1.9  \\
\hline
\end{tabular}
}
}
\caption{Memory safety only}
\end{subtable}
\end{table}

\begin{table}[t]
\caption{Reverification of the experiments presented in~\cite{vandenhaakVerifyingRadioTelescope2024}.
% We compare using \pvlinline|unique| and \pvlinline|const| type qualifiers (\textbf{Qualifiers}) with not using them (\textbf{Base}).
% The `Base` experiments show different times compared to the originals in~\cite{vandenhaakHaliVerDeductiveVerification2024} since we reevaluated them using the same version as for the `Qualifiers`. We tested 4 different algorithms, each with concrete bounds (CB) or non-concrete bounds (NCB).
\label{tab:chp6-padre}
}

\centering
 \small
 \centering
{
\newcommand{\widthPadre}{1}
\begin{subtable}[t]{0.49\textwidth}
\resizebox{\widthPadre\textwidth}{!}{
\begin{tabular}[t]{ll|rrr|rrr|r}
\hline
 & & \multicolumn{3}{c|}{\textbf{Base}} & \multicolumn{3}{c}{\textbf{Qualifiers}} & \\
\textbf{Version} & \textbf{Result} & \textbf{\#} & \textbf{T} & $\mathbf{\sigma}$ & \textbf{\#} & \textbf{T} & $\mathbf{\sigma}$ & \textbf{Speedup} \\
\hline
CB& \checkmark&  10 &  56 & 1&  10&  48 & 1& \cellcolor{ForestGreen!25} 1.2 \\
NCB& \checkmark&  10 &  56 & 1&  10&  50 & 1& \cellcolor{ForestGreen!25} 1.1 \\
\end{tabular}
}
\caption{\label{tab:StepHalide}\texttt{\texttt{step}}}
\end{subtable}
\hfill
\begin{subtable}[t]{0.49\textwidth}
\resizebox{\widthPadre\textwidth}{!}{
\begin{tabular}[t]{ll|rrr|rrr|r}
\hline
 & & \multicolumn{3}{c|}{\textbf{Base}} & \multicolumn{3}{c}{\textbf{Qualifiers}} & \\
\textbf{Version} & \textbf{Result} & \textbf{\#} & \textbf{T} & $\mathbf{\sigma}$ & \textbf{\#} & \textbf{T} & $\mathbf{\sigma}$ & \textbf{Speedup} \\
\hline
CB& \checkmark&  10 &  466 & 9&  10&  142 & 4& \cellcolor{ForestGreen!25} 3.3 \\
NCB& \checkmark& \cellcolor{BrickRed!25} 0 &   & & \cellcolor{ForestGreen!25} 1&  901 & -& \cellcolor{ForestGreen!25} 1.3 \\
& \ding{55}& 10 & 1028 & 58& 9 & 780 & 89 \\
\end{tabular}
}
\caption{\label{tab:SubDirectionHalide}\texttt{\texttt{sub\_direction}}}
\end{subtable}
\\
\begin{subtable}[t]{0.49\textwidth}
\resizebox{\widthPadre\textwidth}{!}{
\begin{tabular}[t]{ll|rrr|rrr|r}
\hline
 & & \multicolumn{3}{c|}{\textbf{Base}} & \multicolumn{3}{c}{\textbf{Qualifiers}} & \\
\textbf{Version} & \textbf{Result} & \textbf{\#} & \textbf{T} & $\mathbf{\sigma}$ & \textbf{\#} & \textbf{T} & $\mathbf{\sigma}$ & \textbf{Speedup} \\
\hline
CB& \checkmark& \cellcolor{BrickRed!25} 0 &   & & \cellcolor{ForestGreen!25} 10&  961 & 14& \cellcolor{ForestGreen!25} 3.4 \\
& \ding{55}& 2 & 2086 & 1404& 0 &  &  \\
& T.O.& 8 & - & -& 0 &  &  \\
NCB& \ding{55}& 9 & 1139 & 131& 6 & 1211 & 468 \\
& T.O.& 1 & - & -& 4 & - & - \\
\end{tabular}
}
\caption{\label{tab:SolveDirectionHalide}\texttt{\texttt{solve\_direction}}}
\end{subtable}
\hfill
\begin{subtable}[t]{0.49\textwidth}
\resizebox{\widthPadre\textwidth}{!}{
\begin{tabular}[t]{ll|rrr|rrr|r}
\hline
 & & \multicolumn{3}{c|}{\textbf{Base}} & \multicolumn{3}{c}{\textbf{Qualifiers}} & \\
\textbf{Version} & \textbf{Result} & \textbf{\#} & \textbf{T} & $\mathbf{\sigma}$ & \textbf{\#} & \textbf{T} & $\mathbf{\sigma}$ & \textbf{Speedup} \\
\hline
CB& \checkmark& \cellcolor{BrickRed!25} 0 &   & & \cellcolor{ForestGreen!25} 10&  2799 & 38& \cellcolor{ForestGreen!25} 1.2 \\
& \ding{55}& 6 & 3146 & 101& 0 &  &  \\
& T.O.& 4 & - & -& 0 &  &  \\
NCB& \ding{55}& 1 & 3589 & -& 0 &  &  \\
& T.O.& 9 & - & -& 10 & - & - \\
\end{tabular}
}
\caption{\label{tab:PerformIterationHalide}\texttt{\texttt{perform\_iteration}}}
\end{subtable}
\\
}
\end{table}

\paragraph{GPU body extraction.}\label{sec:extract}
Verifying monolithic functions can be more costly than verifying code that is functionally the same but is divided into several functions. The slowdown mainly occurs because the underlying SMT solver has \textit{too much} information. Additionally, as mentioned in Section~\ref{sec:types}, too many program statements increase the number of quantified chunks, which harms performance.  Strategies for decomposing large programs into smaller functions for verification have been previously explored in Dafny~\cite{leinoDafnyAutomaticProgram2010} and Gobra~\cite{wolfGobraModularSpecification2021}. For GPU kernels, we observed major improvements when the kernel body is \textit{extracted} into a separate function with its own contract and verified in isolation. Since these contracts are almost identical to standard \vercors GPU-kernel contracts, we added an option to perform this extraction automatically via the \pvlinline|extract_body| annotation for GPU kernels. We also evaluated this feature in our GPU experiments.

\paragraph{Results.} The results are shown in Figure~\ref{fig:blas-results} and the Tables 
%Tables~\ref{tab:blas-results}
~\ref{tab:haliver} and \ref{tab:chp6-padre}. 
All times are in seconds, and we set a time out of 1 hour. We only report the backend verification time ($T$) and the standard deviation of the mean times ($\sigma_{mean}$).
%, as the differences in time come from different encodings in Viper, and the time taken to parse and translate the input is not significantly influenced. 
For the tables, we compare using \pvlinlineb|unique| and \pvlinlineb|immutable| type qualifiers (\textbf{Qualifiers}) with not using them (\textbf{Base}). In all configurations, however, the rewriting procedure was applied, as this was essential
to make the experiments succeed.
The verification result can be successful ($\checkmark$), fail (\ding{55}), or time out (T.O.). With $\#$, we indicate how many times an experiment yielded a certain result, summing up to 10 for each experiment.  Speed-up is calculated by dividing $T$ of the \textbf{Base} version by $T$ of the specific version for all results ($\checkmark$, \ding{55} and T.O).

\paragraph{Evaluation of Section~\ref{sec:types}.}

The CLBlast experiments 
% (Table~\ref{tab:blas-results})
(Figure~\ref{fig:blas-results})
cover the level 1 and level 2 subroutines of BLAS. Level 1 covers vector operations only, while level 2 covers matrix–vector operations, which are more complicated. Kernels with no immutable pointers (xswap, xscal) are not evaluated for \pvlinlineb|immutable|, and kernels with only one pointer (xscal) are not evaluated for \pvlinlineb|unique|. In both these cases, those versions would be the same as the \textbf{Base} version. We fixed the thread block sizes of these experiments to a concrete value, as otherwise verification of these kernels could run into incompleteness due to nonlinearity.
% As mentioned before, nonlinearity can lead to the incompleteness of verification. Therefore, we had to make the thread block sizes concrete in most kernels. Most kernels are implemented for different types (floats, doubles, imaginary numbers) or different matrices (hermitian, symmetric, etc.) using macros. We only verified one specific version of each kernel since all the different options lead to the same structure, and we would expect almost the same verification results there.

These experiments show that applying type qualifiers is always a good idea and can be up to 10 times faster. Similarly, extracting the body of a kernel is always a good idea and speeds up verification (for `xscal', the difference is not statistically relevant). However, the combination of the techniques shows the most impressive results, being \textbf{up to 150 times faster} for `xger'. Especially for the more complex level 2 kernels, this leads to the results being less brittle (all have ten
out of ten successful verifications) or even being able to verify `xger' and `xher2' at all.  In total, the combination of features (All) is 8.5 times faster for the total time of level 1 and 10.8 times faster for level 2.

For the experiments taken from~\cite{vandenhaakHaliVerDeductiveVerification2024} (Table~\ref{tab:haliver}), \textbf{V} indicates that each program has up to four different Halide \textit{schedules} that explore different optimisations that influence parallelisation and the ordering of computations.
%For these experiments,
\textbf{Qualifiers} combines the use of the \pvlinlineb|immutable| and \pvlinlineb|unique| encodings. %from Section~\ref{sec:types}.

These experimental results demonstrate that using the \pvlinlineb|immutable| and \pvlinlineb|unique| qualifiers mostly has a positive influence. In total, the verification time is reduced by a factor of more than 1.9. Again, this is a speedup achieved 
after having made the benchmarks verifiable to begin with, by using the rewriting procedure.

For the experiments taken from~\cite{vandenhaakVerifyingRadioTelescope2024} (Table~\ref{tab:chp6-padre}), we make a distinction between versions with concrete bounds (CB) and non-concrete bounds (NCB). The NCB versions contain nonlinear arithmetic, which can lead to incompleteness. The authors of~\cite{vandenhaakVerifyingRadioTelescope2024}
tried to address this by adding verification lemmas about nonlinear arithmetic, which only led to successful verification in some cases. In the original experiments, the complete algorithm (\texttt{perform\_iteration}) was not included as it could not be verified. We, however, have been able to verify
the CB version.

The experimental results demonstrate that when successful verification is possible, use of the encodings of Section~\ref{sec:types} always speeds up the verification. In addition, more importantly, it allows three cases
to be verified that cannot be verified without using the type qualifier encodings.

\paragraph{Evaluation of Section~\ref{sec:quantifiers}.} All experiments above rely on the rewrite method of Section~\ref{sec:quantifiers}. Without it, the benchmarks cannot be verified as they are. A similar quantifier rewriter,
in a rudimentary form, had been implemented in \vercors before, but that approach had not been formally proven
correct, nor described in the literature. However, for the experiments in~\cite{vandenhaakHaliVerDeductiveVerification2024,vandenhaakVerifyingRadioTelescope2024}, the authors
depended on this rewriter for their results. The current paper proposes a more generally applicable rewriter,
involving better symbolic checks to detect patterns in quantifiers,
the first that enables complete
verification of the Padre algorithm used in the software of a Radio Telescope Pipeline. In addition, this
rewriting procedure
is described in full detail, and refers to a \lean correctness proof.

In principle, most quantifiers could be rewritten manually, but not those that \vercors must introduce automatically for parallel blocks and GPU kernels, which occur in most benchmarks. Among all experiments, \pvlinline|gemm_3| is the only one we could not verify. The rewriting procedure currently fails because its contract uses a linear pattern with the modulo (\pvlinline|%|) operator, which is not yet supported.

\paragraph{Conclusion.} In conclusion, the rewriting procedure is highly effective and allows us to automatically find triggers for most of the data-level parallel programs we considered. Additionally, the qualifiers of Section~\ref{sec:types}, especially when used in combination with the kernel extraction method discussed in the current section, are effective: they significantly reduce verification time and allow for the verification of otherwise unverifiable results.
\section{Related Work}\label{sec:related}

We divide our related work into two parts. First, we discuss some other approaches to support reasoning about quantifiers.
Dafny~\cite{leinoDafnyAutomaticProgram2010} defines symbolic equivalent functions for several arithmetic operators, which can be used in trigger patterns. This increases the expressiveness of trigger patterns. However, when an array is accessed in a different way than the \textit{precise} arithmetic form given in the trigger, it will not instantiate. To achieve successful verification in these cases, further complex reasoning about triggers is needed. Nonetheless, in the special case where arrays are always accessed in the exact same way, this is a valid alternative. Further, Dafny also supports a technique to automatically identify suitable triggers~\cite{leinoTriggerSelectionStrategies2016} 
In this paper our focus is complementary, as we try to increase the chances to find a matching pattern by rewriting the quantified expressions so that they can be matched with a larger number of terms in the specifications.

Second, we look at related work that identifies special restrictions, such as non-aliasing or immutability, by using types.
Chargu\'eraud and Pottier~\cite{chargueraudTemporaryReadOnlyPermissions2017} propose `temporary read-only permissions', an extension to standard separation-logic permissions that can mark memory locations as temporarily read-only without extra verification or specification overhead. This is similar to our \pvlinlineb|immutable| type modifier, except that once an array is marked immutable, it stays that way.
% They mention that the \pvlinline|const| modifier does not offer strong enough guarantees because, through aliasing, a \pvlinline|const| pointer can still be written to by another thread. We mitigate that by the \pvlinline|non-const| modifier, such that type checking can avoid these cases. \todo{Is that correct? We need to make sure the non-const is properly explained}
%
%agree with this observation; thus, in our work, when such behaviour could occur, these pointers should be annotated with. This means such pointers are treated like regular pointers, with permissions and all, such that the verification context can keep track of whether the array is ever updated.
%
In SYMPLAR~\cite{bierhoffAutomatedProgramVerification2011}, Bierhoff introduces \textit{symbolic permissions} as an alternative to fractional permissions. Users annotate types with \pvlinlineb|@Excl| (exclusive) or \pvlinlineb|@Imm| (immutable), and SYMPLAR checks these annotations. Our approach is similar but we integrate it with fractional permissions and still allow arrays with the same uniqueness type to alias, capturing a wider range of behaviours.
% With the \pvlinline|unique| type modifier of our work we do not enforce that two memory locations cannot alias, but we restrict the functions that use these types such that the function body cannot assume that memory locations might overlap and has to treat types of different uniqueness as separate locations.
%
Haack and Poll~\cite{haackTypeBasedObjectImmutability2009} present a type system for object immutability in Java. Our \pvlinlineb|immutable| reference corresponds to their \pvlinlineb|RdWr| type qualifier. The idea is the same, however we work in a verification setting. They also describe `Read-only references', where the object's state cannot be modified through this reference, analogous to \pvlinlineb|const| pointers in \C. A different encoding for read-only pointers seems feasible, but would still require permission bookkeeping in a separation-logic setting.
%!TEX root = main.tex

\section{Conclusions and Future Work}\label{sec:concl}
%This 
%paper presents several ways to better support optimised parallel code in a deductive verifier based on separation logic, in this case \vercors.

To improve the capabilities of deductive verifiers to verify data-level parallel programs, first of all, we introduced a rewriting
procedure for nested quantifiers, which improves trigger matching, and thereby greatly contributes to more efficient 
verification. We also introduced two special type qualifiers, to indicate that arrays are immutable or are not aliases of each other. The use of these qualifiers can be verified with type checking, and tend to greatly reduce verification time.
%, because it allows us to reduce of size of the symbolic state space that needs to be considered in verification.
The experimental results demonstrate the advantages of our optimisations. With these optimisations, we could significantly reduce verification times and verify previously unverifiable cases of a Radio Telescope Pipeline case study~\cite{vandenhaakVerifyingRadioTelescope2024}. In addition, employing all the suggested encoding optimisations makes the verification, on average, 9 times faster for the CLBlast GPU kernels, with maximum improvements of up to 150 times faster.

For future work, we see several possibilities for further improvements. 
%To correctly rewrite nested quantifiers, the verifier often needs to have additional annotations, such that we can statically determine properties like \pvlinlineb|n_k>0|. As an alternative,
For instance, we could add our rewriting procedure directly to \viper, allowing it to query the SMT solver immediately when Equations~\ref{eq:pat3-1}--\ref{eq:pat3-4} need to be checked. This has the advantage that the program state is already modelled at that point, and it enables other \viper front-ends to reuse our rewriting technique. % without implementing it themselves. However, it might contradict \viper's role as an \textit{intermediate} verification language, where adding complex rewrite passes might not be desirable.

%Additionally, beyond linearised array indexes, there might be many more instances for which we can find a bijective function that can be used to go from a multidimensional quantifier to a quantifier with one variable.

% For non-concrete bounds, this is still not enough. While we can provide the programs with accurately proven nonlinear information using the lemmas in Chapter~\ref{chp:padre}, we suspect that the larger program states hinder the underlying tools from retaining this information.

% As for the qualifiers themselves, the implementation of \pvlinline|const| qualifiers should be further improved. Currently, we can only cast or coerce a pointer to a \pvlinline|const| one by permanently removing permissions from the original pointer. This prevents writing to the original pointer again, which we need for this encoding. However, there are use cases where the original pointer is written to again, and this should invalidate the values of the \pvlinline|const| pointer.
% One direction would be to see if the work of Chargu\'eraud and Pottier~\cite{chargueraudTemporaryReadOnlyPermissions2017} we mentioned in the related work section could be used to implement what we want. However, this work is currently presented in the context of separation logic without fractional permissions and parallelism. Thus, we need to investigate how this translates towards the setting with fractional permissions and concurrency.

In addition, it would be interesting to investigate whether the \pvlinline|unique| type qualifier can be automatically applied in a
program based on the results of some static analysis. This would grant the programmer the benefits of this qualifier without burdening them
with the annotation task.

\newpage
\bibliographystyle{splncs04}
\bibliography{bibliography}

@inproceedings{WijsBos12,
	author = {A.J. Wijs and D. Bo\v{s}na\v{c}ki},
	booktitle = {Proceedings of the 19th International SPIN Workshop on Model Checking of Software (SPIN)},
	doi = {10.1007/978-3-642-31759-0_9},
	pages = {98-116},
	publisher = {Springer},
	series = {Lecture Notes in Computer Science},
	title = {Improving {GPU} {S}parse {M}atrix-{V}ector {M}ultiplication for {P}robabilistic {M}odel {C}hecking},
	volume = {7385},
	year = {2012}}

@inproceedings{GreLok11,
	author = {D. Grewe and A. Lokhmotov},
	booktitle = {Proceedings of the 4th {W}orkshop on General Purpose Processing on Graphics Processing Units (GPGPU)},
	doi = {10.1145/1964179.1964196},
	publisher = {ACM},
	title = {Automatically {G}enerating and {T}uning {GPU} {C}ode for {S}parse {M}atrix-{V}ector {M}ultiplication from a {H}igh-{L}evel {R}epresentation},
	year = {2011}}

@inproceedings{LiuTan12,
	author = {X. Liu and S. Tan and H. Wang},
	booktitle = {Proceedings of the 2012 Conference and Exhibition on Design, Automation \& Test in Europe (DATE)},
	doi = {10.1109/DATE.2012.6176615},
	pages = {852-857},
	publisher = {IEEE Computer Society},
	title = {Parallel {S}tatistical {A}nalysis of {A}nalog {C}ircuits by {GPU}-{A}ccelerated {G}raph-{B}ased {A}pproach},
	year = {2012}}

@inproceedings{deeplearning,
	author = {Q.V. Le and J. Ngiam and A. Coates and A. Lahiri and B. Prochnow and A.Y. Ng},
	booktitle = {Proceedings of the 28th International Conference on Machine Learning (ICML)},
	pages = {265-272},
	publisher = {Omnipress},
	title = {On {O}ptimization {M}ethods for {D}eep {L}earning},
	year = {2011}}

@inproceedings{BerBet11,
	author = {C. Bertolli and A. Betts and G. Mudalige and M. Giles and P. Kelly},
	booktitle = {In Proceedings of the 1st Workshop on Grids, Clouds and P2P Programming (CGWS)},
	doi = {10.1007/978-3-642-29737-3_22},
	pages = {191-200},
	publisher = {Springer},
	series = {Lecture Notes in Computer Science},
	title = {Design and {P}erformance of the {OP2} {L}ibrary for {U}nstructured {M}esh {A}pplications},
	volume = {7155},
	year = {2011}}

@inproceedings{WieSpri12,
	author = {S. Wienke and P. Springer and C. Terboven and D. Mey},
	booktitle = {Proceedings of the 18th European Conference on Parallel and Distributed Computing (EuroPar)},
	doi = {10.1007/978-3-642-32820-6_85},
	pages = {859-870},
	publisher = {Springer},
	series = {Lecture Notes in Computer Science},
	title = {{OpenACC} - {F}irst {E}xperiences with {R}eal-{W}orld {A}pplications},
	volume = {7484},
	year = {2012}}

@inproceedings{armborstVerCorsVerifierProgress2024,
	address = {Cham},
	author = {Armborst, Lukas and Bos, Pieter and {van den Haak}, Lars B. and Huisman, Marieke and Rubbens, Robert and {\c S}akar, {\"O}mer and Tasche, Philip},
	booktitle = {Computer {{Aided Verification}}},
	doi = {10.1007/978-3-031-65630-9_1},
	editor = {Gurfinkel, Arie and Ganesh, Vijay},
	isbn = {978-3-031-65630-9},
	langid = {english},
	pages = {3--18},
	publisher = {Springer Nature Switzerland},
	series = {Lecture {{Notes}} in {{Computer Science}}},
	shorttitle = {The {{VerCors Verifier}}},
	title = {The {{VerCors Verifier}}: {{A Progress Report}}},
	volume = {14682},
	year = 2024}

@inproceedings{bierhoffAutomatedProgramVerification2011,
	address = {Portland Oregon USA},
	author = {Bierhoff, Kevin},
	booktitle = {Proceedings of the 10th {{SIGPLAN}} Symposium on {{New}} Ideas, New Paradigms, and Reflections on Programming and Software},
	doi = {10.1145/2048237.2048242},
	isbn = {978-1-4503-0941-7},
	langid = {english},
	month = oct,
	pages = {19--32},
	publisher = {ACM},
	shorttitle = {Automated Program Verification Made {{SYMPLAR}}},
	title = {Automated Program Verification Made {{SYMPLAR}}: Symbolic Permissions for Lightweight Automated Reasoning},
	urldate = {2025-05-10},
	year = 2011}

@article{blomSpecificationVerificationGPGPU2014,
	author = {Blom, Stefan and Huisman, Marieke and Mihel{\v c}i{\'c}, Matej},
	doi = {10.1016/j.scico.2014.03.013},
	ids = {blomSpecificationVerificationGPGPU2014a},
	issn = {01676423},
	journal = {Science of Computer Programming},
	langid = {english},
	month = dec,
	pages = {376--388},
	title = {Specification and Verification of {{GPGPU}} Programs},
	urldate = {2020-06-16},
	volume = {95},
	year = 2014}

@inproceedings{brookesSemanticsConcurrentSeparation2004,
	address = {Berlin, Heidelberg},
	author = {Brookes, Stephen},
	booktitle = {{{CONCUR}} 2004 - {{Concurrency Theory}}},
	doi = {10.1007/978-3-540-28644-8_2},
	editor = {Gardner, Philippa and Yoshida, Nobuko},
	isbn = {978-3-540-28644-8},
	langid = {english},
	pages = {16--34},
	publisher = {Springer},
	title = {A {{Semantics}} for {{Concurrent Separation Logic}}},
	year = 2004}

@incollection{chargueraudTemporaryReadOnlyPermissions2017,
	address = {Berlin, Heidelberg},
	author = {Chargu{\'e}raud, Arthur and Pottier, Fran{\c c}ois},
	booktitle = {Programming {{Languages}} and {{Systems}}},
	doi = {10.1007/978-3-662-54434-1_10},
	editor = {Yang, Hongseok},
	isbn = {978-3-662-54433-4 978-3-662-54434-1},
	langid = {english},
	pages = {260--286},
	publisher = {Springer Berlin Heidelberg},
	title = {Temporary {{Read-Only Permissions}} for {{Separation Logic}}},
	urldate = {2025-05-04},
	volume = {10201},
	year = 2017}

@article{detlefsSimplifyTheoremProver2005,
	author = {Detlefs, David and Nelson, Greg and Saxe, James B.},
	doi = {10.1145/1066100.1066102},
	issn = {0004-5411, 1557-735X},
	journal = {Journal of the ACM},
	langid = {english},
	month = may,
	number = {3},
	pages = {365--473},
	shorttitle = {Simplify},
	title = {Simplify: A Theorem Prover for Program Checking},
	urldate = {2025-05-07},
	volume = {52},
	year = 2005}

@inproceedings{eilersFifteenYearsViper2025,
	author = {Eilers, Marco and Schwerhoff, Malte and Summers, Alexander J. and M{\"u}ller, Peter},
	booktitle = {Computer {{Aided Verification}} ({{CAV}})},
	title = {Fifteen {{Years}} of {{Viper}}},
	urldate = {2025-06-11},
	year = 2025}

@incollection{leinoDafnyAutomaticProgram2010,
	address = {Berlin, Heidelberg},
	author = {Leino, K. Rustan M.},
	booktitle = {Logic for {{Programming}}, {{Artificial Intelligence}}, and {{Reasoning}}},
	doi = {10.1007/978-3-642-17511-4_20},
	editor = {Clarke, Edmund M. and Voronkov, Andrei},
	isbn = {978-3-642-17510-7 978-3-642-17511-4},
	langid = {english},
	pages = {348--370},
	publisher = {Springer Berlin Heidelberg},
	shorttitle = {Dafny},
	title = {Dafny: {{An Automatic Program Verifier}} for {{Functional Correctness}}},
	urldate = {2025-03-27},
	volume = {6355},
	year = 2010}

@inproceedings{leinoTriggerSelectionStrategies2016,
	address = {Cham},
	author = {Leino, K. R. M. and {Pit-Claudel}, Cl{\'e}ment},
	booktitle = {Computer Aided Verification},
	editor = {Chaudhuri, Swarat and Farzan, Azadeh},
	isbn = {978-3-319-41528-4},
	pages = {361--381},
	publisher = {Springer International Publishing},
	title = {Trigger Selection Strategies to Stabilize Program Verifiers},
	year = 2016}

@inproceedings{mouraLean4Theorem2021,
	address = {Cham},
	author = {de Moura, Leonardo and Ullrich, Sebastian},
	booktitle = {Automated {{Deduction}} -- {{CADE}} 28},
	doi = {10.1007/978-3-030-79876-5_37},
	editor = {Platzer, Andr{\'e} and Sutcliffe, Geoff},
	isbn = {978-3-030-79876-5},
	langid = {english},
	pages = {625--635},
	publisher = {Springer International Publishing},
	title = {The {{Lean}} 4 {{Theorem Prover}} and {{Programming Language}}},
	year = 2021}

@incollection{mullerAutomaticVerificationIterated2016,
	address = {Cham},
	author = {M{\"u}ller, Peter and Schwerhoff, Malte and Summers, Alexander J.},
	booktitle = {Computer {{Aided Verification}}},
	doi = {10.1007/978-3-319-41528-4_22},
	editor = {Chaudhuri, Swarat and Farzan, Azadeh},
	isbn = {978-3-319-41527-7 978-3-319-41528-4},
	langid = {english},
	pages = {405--425},
	publisher = {Springer International Publishing},
	title = {Automatic {{Verification}} of {{Iterated Separating Conjunctions Using Symbolic Execution}}},
	urldate = {2024-06-19},
	volume = {9779},
	year = 2016}

@inproceedings{mullerViperVerificationInfrastructure2016,
	address = {Berlin, Heidelberg},
	author = {M{\"u}ller, Peter and Schwerhoff, Malte and Summers, Alexander J.},
	booktitle = {Verification, {{Model Checking}}, and {{Abstract Interpretation}}},
	doi = {10.1007/978-3-662-49122-5_2},
	editor = {Jobstmann, Barbara and Leino, K. Rustan M.},
	isbn = {978-3-662-49122-5},
	langid = {english},
	pages = {41--62},
	publisher = {Springer},
	shorttitle = {Viper},
	title = {Viper: {{A Verification Infrastructure}} for {{Permission-Based Reasoning}}},
	year = 2016}

@inproceedings{nugterenCLBlastTunedOpenCL2018,
	address = {New York, NY, USA},
	author = {Nugteren, Cedric},
	booktitle = {Proceedings of the {{International Workshop}} on {{OpenCL}}},
	doi = {10.1145/3204919.3204924},
	isbn = {978-1-4503-6439-3},
	month = may,
	pages = {1--10},
	publisher = {Association for Computing Machinery},
	series = {{{IWOCL}} '18},
	shorttitle = {{{CLBlast}}},
	title = {{{CLBlast}}: {{A Tuned OpenCL BLAS Library}}},
	urldate = {2026-01-11},
	year = 2018}

@article{ragan-kelleyHalideDecouplingAlgorithms2017,
	author = {{Ragan-Kelley}, Jonathan and Adams, Andrew and Sharlet, Dillon and Barnes, Connelly and Paris, Sylvain and Levoy, Marc and Amarasinghe, Saman and Durand, Fr{\'e}do},
	doi = {10.1145/3150211},
	issn = {0001-0782},
	journal = {Communications of the ACM},
	month = dec,
	number = {1},
	pages = {106--115},
	shorttitle = {Halide},
	title = {Halide: Decoupling Algorithms from Schedules for High-Performance Image Processing},
	urldate = {2023-01-20},
	volume = {61},
	year = 2017}

@phdthesis{schwerhoffAdvancingAutomatedPermissionBased2016,
	author = {Schwerhoff, Malte H.},
	copyright = {http://rightsstatements.org/page/InC-NC/1.0/},
	doi = {10.3929/ethz-a-010835519},
	langid = {english},
	school = {ETH Zurich},
	title = {Advancing {{Automated}}, {{Permission-Based Program Verification Using Symbolic Execution}}},
	type = {Doctoral {{Thesis}}},
	urldate = {2025-04-05},
	year = 2016}

@inproceedings{vandenhaakHaliVerDeductiveVerification2024,
	address = {Cham},
	author = {{van den Haak}, Lars B. and Wijs, Anton J. and Huisman, Marieke and {van den Brand}, Mark G. J.},
	booktitle = {Tools and {{Algorithms}} for the {{Construction}} and {{Analysis}} of {{Systems}}},
	doi = {10.1007/978-3-031-57256-2_4},
	editor = {Finkbeiner, Bernd and Kov{\'a}cs, Laura},
	isbn = {978-3-031-57256-2},
	langid = {english},
	pages = {71--89},
	publisher = {Springer Nature Switzerland},
	series = {Lecture {{Notes}} in {{Computer Science}}},
	shorttitle = {{{HaliVer}}},
	title = {{{HaliVer}}: {{Deductive Verification}} and {{Scheduling Languages Join Forces}}},
	volume = {14572},
	year = 2024}

@inproceedings{vandenhaakVerifyingRadioTelescope2024,
	address = {Cham},
	author = {{van den Haak}, Lars B. and Wijs, Anton J. and Huisman, Marieke and {van den Brand}, Mark G. J.},
	booktitle = {Formal {{Methods}} for {{Industrial Critical Systems}}},
	doi = {10.1007/978-3-031-68150-9_9},
	editor = {Haxthausen, Anne E. and Serwe, Wendelin},
	isbn = {978-3-031-68150-9},
	langid = {english},
	pages = {152--169},
	publisher = {Springer Nature Switzerland},
	series = {Lecture {{Notes}} in {{Computer Science}}},
	shorttitle = {Verifying a~{{Radio Telescope Pipeline Using HaliVer}}},
	title = {Verifying a~{{Radio Telescope Pipeline Using HaliVer}}: {{Solving Nonlinear}} and~{{Quantifier Challenges}}},
	volume = {14952},
	year = 2024}

@incollection{wolfGobraModularSpecification2021,
	address = {Cham},
	author = {Wolf, Felix A. and Arquint, Linard and Clochard, Martin and Oortwijn, Wytse and Pereira, Jo{\~a}o C. and M{\"u}ller, Peter},
	booktitle = {Computer {{Aided Verification}}},
	doi = {10.1007/978-3-030-81685-8_17},
	editor = {Silva, Alexandra and Leino, K. Rustan M.},
	isbn = {978-3-030-81684-1 978-3-030-81685-8},
	langid = {english},
	pages = {367--379},
	publisher = {Springer International Publishing},
	shorttitle = {Gobra},
	title = {Gobra: {{Modular Specification}} and {{Verification}} of {{Go Programs}}},
	urldate = {2026-01-05},
	volume = {12759},
	year = 2021}

@article{hillisDataParallelAlgorithms1986,
  title = {Data Parallel Algorithms},
  author = {Hillis, W. Daniel and Steele, Guy L.},
  year = 1986,
  month = dec,
  journal = {Commun. ACM},
  volume = {29},
  number = {12},
  pages = {1170--1183},
  issn = {0001-0782},
  doi = {10.1145/7902.7903},
  urldate = {2026-01-26}
}

@incollection{haackTypeBasedObjectImmutability2009,
  title = {Type-{{Based Object Immutability}} with {{Flexible Initialization}}},
  booktitle = {{{ECOOP}} 2009 -- {{Object-Oriented Programming}}},
  author = {Haack, Christian and Poll, Erik},
  editor = {Drossopoulou, Sophia},
  year = 2009,
  volume = {5653},
  pages = {520--545},
  publisher = {Springer Berlin Heidelberg},
  address = {Berlin, Heidelberg},
  doi = {10.1007/978-3-642-03013-0_24},
  urldate = {2026-01-26},
  copyright = {http://www.springer.com/tdm},
  isbn = {978-3-642-03012-3 978-3-642-03013-0},
  langid = {english}
}
\newpage
\appendix

% \section{Determining the truth of condition~\ref{eq:pat3-4} }
\section{Lean proof of Theorem~\ref{thm:eq-quant}}\label{sec:appendix-lean-proof}
The complete Lean proof can be found in the accompanying artefact: \url{https://github.com/cav2026-anonymous/Scalable-Deductive-Verification-of-Data-Level-Parallel-Programs/tree/main/LeanProof}.
We have implemented \pvlinline|f|, \pvlinline|f_inv| ($f^{-1}$), \pvlinline|base|, \pvlinline|off|, starting domain \pvlinline|X| and resulting domain \pvlinline|Y| to closely match the definitions given in this section in the file \verb|Definitions.lean|. The properties \ref{eq:pat3-1}, \ref{eq:pat3-2}, and \ref{eq:pat3-3} are modelled in the definition \verb|Props|. The property \ref{eq:pat3-4} is added to the definition of \verb|X|.

One important difference from our Lean proof is that we index slightly differently, since we modelled $(x_1, \ldots, x_k) \in \mathds{Z}$ as vector \verb|xs| (\verb|Vector Int k|) and it was easier to model everything reversed. Thus, $x_1$ corresponds to the last element of the vector \verb|xs.get (k-1)| and $x_k$ corresponds to \verb|xs.get 0|. This also means that the definition of $\base_i$ is reversed.

Eventually, the results can be found in the file \verb|Results.lean|. Theorem \verb | f_inv_on'| proves that $f$ and $f^{-1}$ are inverses on the sets $X$ and $Y$. Whilst \verb|f_bij_on'| proves that $f$ is a bijection between $X$ and $Y$.

Lastly, the result is in theorem \verb|equiv_quantifier'|, which proves Theorem~\ref{thm:eq-quant}.

\section{Determining the constraint of Equation~\ref{eq:pat3-4}}\label{sec:appendix-constraint}
For clarity, we repeat the constraint again.
\begin{equation*}
\forall i, 1 \leq i < k \Rightarrow \forall(x_1, \ldots, x_k) \in X, 
  \sum_{j=1}^{i} |a_j|\cdot (x_j-\xmin_j) < |a_{i+1}|
\end{equation*}

Determining whether this equation holds is generally difficult, since most quantifiers will not include this equation \textit{ specifically} as part of their domain. Especially if an $a_i$ is not a concrete number, these are nonlinear equations that we cannot easily solve. Therefore, we looked for additional components of the quantifier for which we could solve (part) of this equation more easily.

Internally in \vercors, we check equation~\ref{eq:pat3-4} by starting at the lowest $i=1$ and iterating up to $k$. Suppose that you know that up to $i-1$ we have proven the equation and now we want to prove that it holds for $i$.

We consider two cases in which we can prove this. In both cases $x_i$ should have an upper bound $\xmax_i$, such that $n_i$ is defined and $n_i>0$. 
The first case is that $n_i\cdot a_i = a_{i+1}$ holds, the second is that $|a_{i}|\cdot n_i \leq |a_{i+1}|$ holds. In the following lemma, we prove that this is correct.

\begin{lemma}\label{lemma:2patterns}
Suppose that for any integers $i$ and $k$, $1 \leq i < k$, and all 
\\ $(x_1, \ldots, x_k) \in X$ it is proven that
\begin{equation}\label{eq:proof-pat-1}
\forall i' \in \mathbb{Z}, 1 \leq i' < i \Rightarrow
  \sum_{j=1}^{i'} |a_j|\cdot (x_j-\xmin_j) < |a_{i'+1}|
\end{equation}
holds, we have that $a_i\neq0$, $n_i >0$, all $a_i$ have the same sign, and we have either that $n_i\cdot a_i = a_{i+1}$ or that $|a_{i}|\cdot n_i \leq |a_{i+1}|$
holds. Then
\begin{equation}\label{eq:proof-pat-2}
\sum_{j=1}^{i} |a_j|\cdot (x_j-\xmin_j) < |a_{i+1}|
\end{equation}
is true.
\end{lemma}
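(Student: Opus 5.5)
We split the sum in \eqref{eq:proof-pat-2} into its last term and the rest:
\[
S_i := \sum_{j=1}^{i} |a_j|\cdot (x_j-\xmin_j) = S_{i-1} + |a_i|\cdot(x_i-\xmin_i),
\]
with $S_0 = 0$. The plan is to bound $S_{i-1}$ strictly by $|a_i|$ and the last term by $|a_i|\cdot(n_i-1)$. Adding these gives $S_i < |a_i|\cdot n_i$, and each of the two hypothesised cases then yields $|a_i|\cdot n_i \leq |a_{i+1}|$.

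\textbf{Bounding the rest, $S_{i-1} < |a_i|$.} If $i = 1$, then $S_0 = 0 < |a_1|$ because $a_1 \neq 0$. If $i > 1$, we instantiate \eqref{eq:proof-pat-1} with $i' = i-1$, which satisfies $1 \leq i' < i$; this gives exactly $S_{i-1} < |a_i|$. All variables are integers, so this is equivalent to $S_{i-1} \leq |a_i| - 1$.

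\textbf{Bounding the last term.} Since $(x_1,\ldots,x_k)\in X$, we have $\xmin_i \leq x_i < \xmax_i$, where we assume, as stated in the setup of the lemma, that the domain supplies the upper bound $\xmax_i$. Hence $0 \leq x_i - \xmin_i \leq n_i - 1$. Multiplying by $|a_i| \geq 0$ gives $|a_i|\cdot(x_i-\xmin_i) \leq |a_i|\cdot(n_i-1)$. Adding the two bounds yields
\[
S_i \leq |a_i| - 1 + |a_i|\cdot n_i - |a_i| = |a_i|\cdot n_i - 1 < |a_i|\cdot n_i .
\]

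\textbf{Reduction to $|a_i|\cdot n_i \leq |a_{i+1}|$.} In the second case this is the hypothesis itself. In the first case, $n_i\cdot a_i = a_{i+1}$, and taking absolute values gives $|a_{i+1}| = n_i\cdot|a_i|$ because $n_i > 0$. So this case reduces to the second. The sign assumption is not even needed here: $n_i > 0$ alone guarantees that $n_i$ and $|n_i|$ coincide. Combining with the previous step gives $S_i < |a_{i+1}|$, which is \eqref{eq:proof-pat-2}.

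The argument is elementary. The only points needing care are the integer step that turns the strict inequality $S_{i-1} < |a_i|$ into $S_{i-1} \leq |a_i| - 1$, and the base case $i = 1$, where \eqref{eq:proof-pat-1} is vacuous and $a_1 \neq 0$ must be used instead. We expect the main obstacle to be only bookkeeping: ensuring the bounds $\xmin_i \leq x_i < \xmax_i$ are indeed part of the domain $X$ for this particular $i$, which the lemma assumes.
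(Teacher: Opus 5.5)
Your proof is correct and is essentially the paper's argument: split off the last term, bound it by $|a_i|\cdot(n_i-1)$ using the upper bound on $x_i$, and bound the remaining sum strictly by $|a_i|$ via the hypothesis at $i'=i-1$. The only difference is organisational: you handle $i=1$ through the empty sum and $a_1\neq 0$, and you reduce $n_i\cdot a_i=a_{i+1}$ to $|a_i|\cdot n_i\le|a_{i+1}|$, which collapses the paper's four cases into one (the integer-rounding step you flag is harmless but not actually needed).
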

\begin{proof}
\noindent
\textbf{Case $i=1$ and $n_i\cdot a_i = a_{i+1}$}:\\
% We need to prove that $|a_1|\cdot(x_1 - \xmin_1) < |a_2|$.
We have $x_1 < \xmax_1$, which implies that $x_1-\xmin_1 < \xmax_1-\xmin_1=n_1$, which in turn implies $|a_1|\cdot(x_1-\xmin_1) < n_1\cdot|a_1| = |n_1 \cdot a_1|$.

We have $a_1 \cdot n_1 = a_{2}$. Thus, we then have
$|a_1|\cdot(x_1-\xmin_1) < |a_2|$, which was what we wanted to prove for \ref{eq:proof-pat-2}.

\noindent
\textbf{Case $i>1$ and $n_i\cdot a_i = a_{i+1}$}:\\
% We only have to consider the case that $i<k$. For bigger $i$, the claim is vacuously true.
That is, we want to prove the following. $$\sum_{j=1}^{i} |a_j| \cdot (x_j-\xmin_j) < |a_{i+1}|$$

We have that $x_{i} < \xmax_{i}$, which implies $x_{i}-\xmin_{i} \leq \xmax_{i}-\xmin_{i}-1=n_{i}-1$, which in turn implies $|a_{i}|\cdot(x_{i}-\xmin_{i}) \leq (n_{i}-1)\cdot|a_{i}|=|n_{i}\cdot a_{i}|-|a_{i}|$.

We have $a_{i} \cdot n_{i} = a_{i+1}$. Thus, we then have
\begin{equation}\label{eq:fact1}
|a_{i}|\cdot(x_{i}-\xmin_{i}) \leq |a_{i+1}|-|a_{i}|  
\end{equation}
We rewrite the proof goal.
\begin{align*}
&\sum_{j=1}^{i-1} |a_j| \cdot (x_j-\xmin_j) \\
=&|a_{i}|\cdot(x_{i}-\xmin_{i})+ \sum_{j=1}^{i-1} |a_j| \cdot  (x_j-\xmin_j) \\
\end{align*}
Using equations~\ref{eq:fact1}, and \ref{eq:proof-pat-1} by filling in $i'=i-1$, we get
\begin{align*}
&\sum_{j=1}^{i-1} |a_j| \cdot (x_j-\xmin_j) < |a_{i+1}|-|a_{i}| + |a_{i}| = |a_{i+1}|
\end{align*}
This is exactly equation~\ref{eq:proof-pat-2} which was the proof goal.

\noindent
\textbf{Case $i=1$ and $|a_{i}|\cdot n_i \leq |a_{i+1}|$}:\\
% We need to prove that $|a_1|\cdot(x_1 - \xmin_1) < |a_2|$.
We have $x_1 < \xmax_1$, which implies that $x_1-\xmin_1 < \xmax_1-\xmin_1=n_1$, which in turn implies $|a_1|\cdot(x_1-\xmin_1) < |a_1| \cdot n_1$.

We have $|a_1| \cdot n_1 \leq |a_{2}|$. Thus, we then have
$|a_1|\cdot(x_1-\xmin_1) < |a_1|\cdot n_1 \leq |a_2|$, which was what we wanted to prove for \ref{eq:proof-pat-2}.

\noindent
\textbf{Case $i>1$ and $|a_{i}|\cdot n_i \leq |a_{i+1}|$}:\\
We have that $x_{i} < \xmax_{i}$, which implies $x_{i}-\xmin_{i} \leq \xmax_{i}-\xmin_{i}-1=n_{i}-1$, which in turn implies $|a_{i}|\cdot(x_{i}-\xmin_{i}) \leq (n_{i}-1)\cdot|a_{i}|=|n_{i}\cdot a_{i}|-|a_{i}|$.

We have $|a_{i}| \cdot n_{i} \leq |a_{i+1}|$. Thus, we then have
\begin{equation}\label{eq:fact2}
|a_{i}|\cdot(x_{i}-\xmin_{i}) \leq |a_{i+1}|-|a_{i}|
\end{equation}
We rewrite the proof goal.
\begin{align*}
&\sum_{j=1}^{i} |a_j| \cdot (x_j-\xmin_j)\\
=&|a_{i}|\cdot(x_{i}-\xmin_{i})+ \sum_{j=1}^{i-1} |a_j| \cdot (x_j-\xmin_j)\\
\end{align*}
Using equations~\ref{eq:fact2}, and \ref{eq:proof-pat-1} by filling in $i'=i-1$ we get
\begin{align*}
\sum_{j=1}^{i} |a_j| \cdot (x_j-\xmin_j) < |a_{i+1}|-|a_{i}| + |a_{i}| = |a_{i+1}|
\end{align*}
This is exactly equation~\ref{eq:proof-pat-2} which was the proof goal.

Since either $i=1$ or $i>1$ holds and either $|a_{i}|\cdot n_i \leq |a_{i+1}|$ or $n_i\cdot a_i = a_{i+1}$ holds, we covered all the cases and proved in all cases that equation \ref{eq:proof-pat-2} holds.
\end{proof}

Thus, with the above lemma, it means that for each $i$ we can first check if $n_i\cdot a_i = a_{i+1}$ or $|a_{i}|\cdot n_i \leq |a_{i+1}|$ holds to determine the constraint of equation~\ref{eq:pat3-4}. Then only in the third fallback case do we ask our symbolic evaluator if equation~\ref{eq:pat3-4} holds directly for a specific $i$.

\end{document}